\pgfplotsset{compat=1.11}
\newcommand\bR{\mathbb{R}}
\newcommand{\supp}{\mathtt{supp}}
\newcommand{\gt}{\hat{x}}
\newtheorem{theorem}{Theorem}[section]
\newtheorem{corollary}{Corollary}[section]
\newtheorem{lemma}{Lemma}[section]
\newtheorem{definition}{Definition}[section]
\begin{document}
	
	\title{From sparse recovery to plug-and-play priors, understanding trade-offs for stable recovery with generalized projected gradient descent}
	\author{
		Ali Joundi\thanks{Univ. Bordeaux, Bordeaux INP, CNRS, IMB, F-33400, Talence, France
			\texttt{\{ali.joundi,yann.traonmilin, jean-francois.aujol\}@math.u-bordeaux.fr}.}
		,~ Yann  Traonmilin\footnotemark[1],~ Jean-François Aujol\footnotemark[1] }

	\maketitle
	
	\begin{abstract}
		We consider the problem of recovering an unknown low-dimensional vector from noisy, underdetermined observations. We focus on the Generalized Projected Gradient Descent (GPGD) framework, which unifies traditional sparse recovery methods and modern approaches using learned deep projective priors. 
		We extend previous convergence results to robustness to model and projection errors. We use these theoretical results to explore ways to better control stability and robustness constants. To reduce recovery errors due to measurement noise, we consider generalized back-projection strategies to adapt GPGD to structured noise, such as sparse outliers. To improve the stability of GPGD, we propose a normalized idempotent regularization for the learning of deep projective priors. 
		We provide numerical experiments in the context of sparse recovery and image inverse problems, highlighting the trade-offs between identifiability and stability that can be achieved with such methods. 
	\end{abstract}
	\section{Introduction}
	We study the problem of recovering an unknown vector $\gt \in \bR^N$ from an undetermined number of noisy observations $y \in \bR^m$ defined by:
	\begin{equation}\label{eq:obs}
		y = A\gt+e
	\end{equation}
	where $A$ is the measurement operator and $e \in \bR^m$ is the measurement noise. Many problems in data science can be modeled this way, in particular, imaging problems, where $\gt$ is an image (in biology, medicine, astronomy,  \dots).
	
	As the number of measurements is often insufficient, $ m < N$, a prior model on the unknown is \emph{necessary}. In this article, we assume $\gt$ lies approximately in $\Sigma$, where $\Sigma$ is a \emph{low-dimensional} model, i.e., a subset of $\bR^N$ that can be described with few parameters. This setup is the basis of sparse recovery theory, where, under some conditions on the measurement operator $A$ (e.g., number of measurements of random Gaussian matrices), it is possible to guarantee stable (with respect to noise) and robust recovery (with respect to model error) of elements of $\Sigma$ with convex or non-convex algorithms. For example, for noise of finite energy, given an estimate $x^\star$ of $\gt$, such guarantees are expressed,
	\begin{equation}\label{eq:guarantees}
		\|x^\star - \gt \|_2^2 \leq C_1 \|e\|_2^2 + C_2 d(\hat{x},\Sigma)
	\end{equation}
	where $C_1,C_2$ are constants and $d(\hat{x},\Sigma)$ is some notion of distance to $\Sigma$.
	
	Learning based methods, where the prior on the unknown vector $\hat{x}$ is learned, have been very successful at solving such inverse problems when a large database $X \subset \bR^N$ of examples is available. A large part of the literature uses the minimization of potentially non-convex functionals to solve such problems. In particular, the popular plug-and-play methods use a general-purpose denoiser learned on $X$ as a projective prior used for the minimization of such functionals and consider convergence to critical points of such functions.  Many learning based methods can be understood as methods using \emph{deep projective priors} where a generalized projection onto a non-explicit set $\Sigma$ is learned. In \cite{traonmilin2024towards}, instead of considering the minimization of an underlying functional, it was proposed to consider unified guarantees of generalized projected gradient descent (GPGD). It was shown that GPGD for low-dimensional recovery can model, at the same time  sparse recovery (through the classical iterative hard thresholding algorithm) and a class of learning based plug-and-play methods where the denoiser is used as a projection. We define GPGD iteration as
	\begin{equation}\label{eq:def_PGD_intro}
		\mathcal{I}(P_\Sigma):\quad\quad x_{n+1} = P_{\Sigma}(x_n) - \mu A^T(AP_{\Sigma}(x_n)-y) 
	\end{equation}
	where $P_{\Sigma}$ is a generalized projection on the low dimensional model $\Sigma$ and $\mu>0$ is a fixed step size. Note that projected gradient descent is often presented with the projection and descent step reversed.  The "gradient" in \eqref{eq:def_PGD_intro} is the gradient of the $\ell^2$-datafit $\nabla\frac{1}{2}\|Ax-y\|_2^2 =A^T(Ax -y)$. This gradient can also be interpreted as a back-projection of the residual $Ax-y$ to the ambient space of $\hat{x}$ through the back-projection operator $A^T$. 
	
	For sparse recovery, using the orthogonal projection leads to the iterative hard thresholding algorithm. In the context of plug-and-play methods, using a general-purpose denoiser as a projection leads to the proximal gradient descent plug-and-play method. Other methods relying on deep projective priors, such as auto-encoders, can be interpreted in this framework~\cite{joundi2025stochastic}. \cite{traonmilin2024towards} shows that global linear recovery is possible provided the measurement operator $A$ verifies a restricted isometry property and the projection verifies a \emph{restricted Lipschitz property}. It is further shown that the restricted Lipschitz constant drives both the identifiability capabilities of $\Sigma$ and the convergence rate. It is advocated that this constant is thus a good property to compare different GPGD algorithms to recover the same model and even to consider optimal methods (in this context, iterative hard thresholding is shown to be near-optimal for sparse recovery).
	
	However, \cite{traonmilin2024towards} only considers a perfect modeling without noise, model error and with "ideal" projection having the restricted Lipschitz property. \cite{joundi2025stochastic} showed that stable recovery (stability to observation noise) was guaranteed under the same conditions. However, stability to noise has not been fully explored. For example, the considered PGD algorithm uses the gradient $A^T(Ax-y)$ of the $\ell^2$ data-fit functional $\frac{1}{2}\|A x - y\|_2^2$. While this gradient is well adapted to Gaussian white noise (as it is related to the maximum likelihood estimator), it might not be adapted to other types of degradations such as sparse corruptions, where the energy of the noise cannot be bounded. The validity of initial results with model error and approximate projections is still an open question. 
	
	In this paper, we study how the generalized projected gradient descent behaves under the presence of noise, modeling error, and approximate projection, in a framework that unifies sparse recovery and methods relying on deep projective priors. In particular, we study whether the restricted isometry and restricted Lipschitz conditions are sufficient to guarantee stable and robust recovery. We also discuss how to minimize or control different error terms in the design of GPGD methods.
	
	\subsection{Contributions}
	
	In Section~\ref{sec:theory}, we provide a general stable and robust recovery theorem for generalized projected gradient descent with arbitrary back-projections. This theorem shows how using GPGD with general back-projections and approximate projections leads to stability to measurement noise, modeling error, and approximate projections. In such a context, the restricted isometry constants and restricted Lipschitz conditions are still the main factors impacting recovery.
	
	In Section~\ref{sec:noise_adapt}, when sparse outliers contaminate measurements, we illustrate how adapting the back-projection leads to stable recovery and provide experiments within the plug-and-play framework. 
	
	In Section~\ref{sec:idempotent}, we propose a normalized idempotent regularization (NIPR) to control the approximate projection error. We show experimentally that NIPR for deep projective priors (plug-and-play priors and auto-encoder priors in Section~\ref{sec:additional_exp} in the appendix) improves the stability of convergence of GPGD while preserving reconstruction quality.

	\subsection{Related work}

	There exists a wide body of work studying the projected gradient descent in various contexts. For sparse recovery, PGD is known as iterative hard thresholding, and its linear convergence under a restricted isometry property of the operator $A$ has been shown in \cite{blumensath2010normalized,foucart2011hard}. Similar results for function minimization with PGD without an explicit low-dimensional model were given in \cite{oymak2017sharp}. Approximate \emph{orthogonal} projections for the recovery of low-dimensional models are studied in \cite{golbabaee2018inexact}. In \cite{bahmani2016learning}, global convergence of PGD is given for a class of generalized sparsity models and the orthogonal projection. \cite{traonmilin2024towards} decouples the convergence rate through a restricted isometry constant and a restricted Lpischitz constant of the projection. In this work, we will consider approximate \emph{restricted Lipschitz} projections and stable recovery for generic noises. Global convergence of gradient projection has been shown under a general KL property in \cite{attouch2013convergence}. General stationary properties of the iterates of PGD are given in~\cite{olikier2024projected}. In this work, we only consider cases where linear convergence is proven.
	
	Beyond the optimality approach in \cite{traonmilin2024towards}, other works intend to formally improve PGD algorithms. For thresholding algorithms, \cite{liu2020between} optimizes a local concavity property to improve local convergence properties. In \cite{kamilov2016learning}, an optimal non-linearity is learned from the data. In \cite{ollila2014robust}, a robust iterative hard thresholding algorithm is presented: it proposes to use $A^TW$ where $W$ is a shrinkage operator calculated iteratively. However, no theoretical analysis is provided. In~\cite{bhatia2015robust}, an iterative thresholding applied solely to the residual is presented (no sparsity model on the data). In this work, we will discuss adaptation to noise in the context of outliers and the impact on recovery guarantees. Note that adaptation to noise in the variational context (minimization of a regularized data-fit functional) is generally done by adapting the norm of the data-fit (see e.g. \cite{popilka2007signal,studer2011recovery,traonmilin2015robust} for sparse outliers).
	
	The generalized projected gradient descent context allows us to study recovery algorithms relying on deep priors. Specifically, in the class of state-of-the-art plug-and-play (imaging) algorithms \cite{venkatakrishnan2013plug,cohen2021regularization,chen2021deep,kamilov2023plug}, the so-called proximal gradient method \cite{zhang2021plug} is a generalized projected gradient descent algorithm where the projection is performed using a general-purpose denoiser learned with a deep neural network. While \cite{traonmilin2024towards,joundi2025stochastic} showed experimentally that this method linearly converges to approximate fixed points of the denoiser, a complete theoretical study of stability, robustness, and approximation of a projection with deep neural networks was not given.

	In \cite{shocher2023idempotent}, in the context of generative modeling, the importance of having access to a real projection onto the data manifold is emphasized. In consequence, an idempotent regularization of the generative model is proposed (i.e $ \|f\circ f-f\|$ where $f$ is the generating function) and shows improved stability of the generation. We will build on this idea to improve the stability of deep projective priors. Note that there exists a recent wide literature on stochastic algorithms using generative models such as diffusion or flow matching priors to solve inverse problems (see e.g. surveys \cite{he2025diffusion, daras2024survey}). While links with deep projective priors have been made in deterministic settings,  such algorithms are out of the scope of this article where we focus on deterministic "plug-and-play like" algorithms.

	\section{Notations, definitions and previous results}\label{sec:prev_results}
	We introduce our theoretical framework and recall previous results useful to understand our main theoretical result in the next section.  We suppose that $\Sigma$ is a homogeneous space (verified by sparse and low-rank models).
	
	To guarantee uniform recovery, we assume that the measurement operator $A$ verifies a restricted isometry property (a lower RIP is anyway \emph{necessary} for the identifiability of $\Sigma$ \cite{bourrier2014fundamental}). We use the following notion of restricted isometry constant.
	
	\begin{definition}\label{def:RIC}
		The operator $B$ has a restricted isometry constant (RIC) $\delta <1$ on the secant set $\Sigma-\Sigma =\{x_1-x_2 : x_1,x_2 \in \Sigma \}$ if for all $x_1,x_2 \in \Sigma$
		
		\begin{equation}
			\|(B-I)(x_1-x_2)\|_2\leq \delta \|x_1-x_2\|_2
		\end{equation}
		We write $\delta_\Sigma(B)$ the smallest admissible restricted isometry constant (RIC).
	\end{definition}
	
	Definition~\ref{def:RIC} is well adapted to the study of the composition of the measurement operator with a generalized back-projection of the form $LA$ (where $L$ is a linear operator) through the constant $\delta(LA)$ (in iterations~\eqref{eq:def_PGD_intro}, $L=A^T$). We consider the following notion of generalized projection.
	
	\begin{definition}[Generalized projection]
		Let $\Sigma \subset \bR^N$. A (set-valued) generalized projection onto $\Sigma$ is a function $P$ such that for any $z\in\bR^N$, $P(z) \subset \Sigma$.
	\end{definition}
	
	By abuse of notation, to facilitate reading, an equation true for any $w \in P(z)$ is written using the notation $P(z)$. Considering such generalized projection brings flexibility to the framework: both sparse models and deep projective priors can be considered (as the latter might not be projections in the usual mathematical definition as idempotent operators $P\circ P = P$). 
	
	\begin{definition}[Orthogonal projection]\label{def:orth_proj}
		We define, when it exists, the (set-valued) orthogonal projection onto a set $\Sigma \subset \bR^N$ as follows: for all $z\in\bR^N$
		\begin{equation}
			P_\Sigma^\perp(z) = \arg \min_{x\in\Sigma} \| x-z \|_2^2.
		\end{equation}
	\end{definition}
	
	We recall the restricted Lipschitz property of the projection introduced by \cite{traonmilin2024towards} to study linear convergence of GPGD.
	
	\begin{definition}[Restricted Lipschitz property]\label{def:lip_const}
		Let $P : \bR^N \to \bR^N$. Then $P$ has the restricted $\beta$-Lipschitz property with respect to $\Sigma$ if for all $z \in \bR^N, x \in \Sigma, u \in P(z)$ we have
		\begin{equation}
			\begin{split}
				\|u-x\|_2 &\leq \beta \|z-x\|_2\\
			\end{split}
		\end{equation}
		We denote by $\beta_{\Sigma}(P)$ the smallest $\beta$ for which $P$ satisfies the restricted $\beta$-Lipschitz property.
	\end{definition}
	Note that for single-valued projection, this can be rewritten $\|P(z)-P(x)\|_2 \leq \beta \|z-x\|_2$, hence the name \emph{restricted Lipschitz}. Linear recovery of $\hat{x}$ with GPGD has been shown when $\delta(A^TA) \beta_\Sigma(P) <1$.
	%We can now recall the noiseless result from \cite{traonmilin2024towards}.
	%
	%\begin{theorem}[Linear recovery of low-dimensional models]\label{th:gen_convevergence}
	%Let $\Sigma \subset \bR^N$. Suppose $A$ is centered for the RIC. Suppose $\mu =1$. For any $\gt \in \Sigma, x_0 \in\bR^N$, consider the iterates $x_n$ resulting from Algorithm~\eqref{eq:def_PGD_intro}. Suppose $P_\Sigma$ has the $\beta$-restricted Lipschitz property with respect to $\Sigma$ (with $\beta = \beta_\Sigma(P_\Sigma)$). Suppose $\delta\beta <1$, then we have, for all $\gt\in\Sigma, x_0 \in \bR^N, n\geq 1$,
	%\begin{equation}\label{eq:res_linear_rate}
	% \|x_n-\gt\|_2 \leq (\delta(A^TA)\beta)^n \|x_0 -\gt\|_2.
	%\end{equation}
	%\end{theorem}
	If the orthogonal projection $P_\Sigma^\perp$ exists ($\Sigma$ is then called \emph{proximinal}), we have $1 \leq \beta_\Sigma(P_\Sigma^\perp)\leq 2$. It was shown that the optimal projection $P^\star$ minimizing $\beta_\Sigma(P^\star)$ exists. For sparse recovery ($\Sigma= \Sigma_k$, the set of sparse vectors), using the orthogonal projection corresponds to the iterative hard thresholding algorithm and is nearly optimal for the restricted Lipschitz constant with $\beta_{\Sigma_k}(P_{\Sigma_k}^\perp)\leq \sqrt{\frac{3 +\sqrt{5}}{2}}  \approx 1.618$. Also note that if $\beta_\Sigma(P) < \infty$ then $P$ is necessarily idempotent.

	\section{Stable and robust linear recovery with GPGD with approximate projections} \label{sec:theory}
	
	We give a general recovery result with GPGD with generalized back-projections and approximate projections. We include three deviations from an ideal noiseless model: stable recovery with generalized back projection, robustness to model error and robustness to approximate projections. 
	
	\noindent \textbf{Adaptation to noise} As was proposed in previous work~\cite{traonmilin2024towards}, the restricted Lipschitz constant is a good objective to minimize in the search for optimal projections for GPGD algorithms given a low-dimensional model $\Sigma$. However, it is not clear in a noisy context what would be an optimal GPGD as we have guarantees of the form:
	\begin{equation}
		\|x_n - \gt\|_2^2 \leq C_1 r^n + C_2\|e\|_2
	\end{equation}
	where $x_n$ are the iterates of the considered GPGD method and $C_1,C_2,r \geq 0 $ are some constants.
	
	To define an optimal GPGD method as minimizing such an upper bound, we need to optimize both the rate $r$ and the stability constant $C_2$. When there is some knowledge about the structure of the noise, variational methods provide a way to adapt the recovery method by adapting the norm of the data fit term. In particular, it is known that in some cases, stability to unbounded noise (sparse corruptions) can be obtained using a sparsity-inducing norm such as the $ \ell^1$-norm for the data-fit term \cite{popilka2007signal,studer2011recovery,traonmilin2015robust}.  Following similar ideas, we propose to generalize the projected gradient descent by considering general back-projections instead of the back-projection $A^T$ induced by a $\ell^2$ data-fit. 
	
	\noindent \textbf{Model error} In practice, we have that $\hat{x} \notin \Sigma$, but we assume some control $ d_{P_\Sigma}(\hat{x},\Sigma) \leq \tau$ (or equivalently $\|\hat{x}- P_\Sigma(\hat{x})\|_2 \leq \tau$) for the distance to the model $d_{P_\Sigma}(\cdot,\Sigma) $ associated to the projection $P_\Sigma$.

	\noindent \textbf{Approximate projections} In the plug-and-play framework, in GPGD, we use a general-purpose denoiser as $P_\Sigma$. While ensuring that $P_\Sigma$ is actually a projection (i.e., it has a set of fixed points) would guarantee linear convergence, in practice, it is not a real projection. In fact, it is often observed that iterations of PGD in this context diverge after having reached an optimal point. To model this effect, suppose that in place of a restricted $\beta$-Lipschitz projection $P_\Sigma$, we use a projection $P$ such that:
	\begin{equation}
		P(x_n) = P_\Sigma(x_n) + R(x_n)
	\end{equation}
	where $\|R(x_n)\|_2 \leq \eta$ for some constant $\eta>0$.

	We consider GPGD with generalized back-projection iterations:
	\begin{equation} \label{def:gen_PGD}
		\begin{split}
			\mathcal{I}_{GBP}(P,L): \quad \quad x_{n+1} &= P(x_n) - \mu L(AP(x_n)-y)
		\end{split}
	\end{equation}
	where $L \in \bR^{N\times m} $ is a general linear back-projection from the observation space to the ambient space of $\gt$ and $P_\Sigma$ is a generalized projection onto $\Sigma$. This way, the complexity of the back-projection step is limited to the cost of a matrix vector multiplication. We present our main convergence result.
	\begin{theorem} \label{th:stability}
		Let $\Sigma \subset \bR^N$. Let $\mu,\eta >0$. Let $P_\Sigma$ be a generalized projection onto $\Sigma$. Consider iterates from the GPGD with approximate projection $P = P_\Sigma+R$ and $\|R(x)\|_2\leq \eta$ for all $x\in\bR^N$. Suppose that $\mu L A$ has restricted isometry constant $\delta := \delta(\mu L A)$ and $P_\Sigma$ has restricted Lipschitz constant $\beta := \beta_\Sigma(P_\Sigma)$. Consider $d_{P_\Sigma}(\gt,\Sigma) := \|P_\Sigma(\gt)-\gt\|_2$. Assuming $\delta\beta<1$, we have
		\begin{equation}
			\begin{split}
				\|x_{n} -P_\Sigma(\gt)\|_2& \leq (\delta \beta)^{n}\|x_0-P_\Sigma(\gt)\|_2 + C_{\mathrm{stab}} \| L e  \|_2 +C_{\mathrm{rob}} d_{P_\Sigma}(\gt,\Sigma) + C_{\mathrm{proj}} \eta
			\end{split}
		\end{equation}
		where we define the stability constant $C_{\mathrm{stab}} := \frac{\mu}{1-\delta \beta} $;
		the robustness (to model error) constant $C_{\mathrm{rob}} := \frac{1}{1 -\delta\beta}\| \mu L A\|_{\mathrm{op}}$;
		the approximate projection constant $C_{\mathrm{proj}} := \frac{1}{1 -\delta\beta}\| I-\mu L A\|_{\mathrm{op}}$.
		We also have
		\begin{equation}
			\begin{split}
				\|x_{n} -\gt\|_2& \leq  (\delta \beta)^{n}\|x_0-P_\Sigma(\gt)\|_2 + C_{\mathrm{stab}}  \| \mu L e  \|_2 + C_{\mathrm{rob}}' d_{P_\Sigma}(\gt,\Sigma) + C_{\mathrm{proj}} \eta
			\end{split}
		\end{equation}
		where $C_{\mathrm{rob}}' := 1+\frac{\|\mu L A\|_{\mathrm{op}}}{1-\delta \beta}$.
	\end{theorem}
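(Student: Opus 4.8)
The plan is to recognize \eqref{def:gen_PGD} as an affine recursion and to track the error relative to the shifted target $P_\Sigma(\gt)$ rather than $\gt$ itself. First I would rewrite the iteration as $x_{n+1} = (I-\mu L A)P(x_n) + \mu L y$, then substitute $P(x_n) = P_\Sigma(x_n) + R(x_n)$ together with $y = A\gt + e$. Subtracting $P_\Sigma(\gt)$ from both sides and using the elementary identity $(I-\mu LA)P_\Sigma(\gt) - P_\Sigma(\gt) = -\mu LA\,P_\Sigma(\gt)$ to recombine terms, I expect to reach the clean error identity
\begin{equation*}
x_{n+1} - P_\Sigma(\gt) = (I-\mu LA)\bigl(P_\Sigma(x_n) - P_\Sigma(\gt)\bigr) + \mu LA\bigl(\gt - P_\Sigma(\gt)\bigr) + \mu L e + (I-\mu LA)R(x_n).
\end{equation*}
This is the structural heart of the argument: it isolates a contractive term, a model-error term, a noise term, and an approximate-projection term.

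The key step, and the one where both constants enter, is bounding the first term. Since $P_\Sigma(x_n)$ and $P_\Sigma(\gt)$ both lie in $\Sigma$, their difference is a secant, so the restricted isometry constant of $B=\mu LA$ (Definition~\ref{def:RIC}) gives $\|(I-\mu LA)(P_\Sigma(x_n)-P_\Sigma(\gt))\|_2 \le \delta\,\|P_\Sigma(x_n)-P_\Sigma(\gt)\|_2$. Then the restricted Lipschitz property (Definition~\ref{def:lip_const}) applied with $z=x_n$ and $x=P_\Sigma(\gt)\in\Sigma$ yields $\|P_\Sigma(x_n)-P_\Sigma(\gt)\|_2 \le \beta\,\|x_n-P_\Sigma(\gt)\|_2$, so the two properties compose into the contraction factor $\delta\beta<1$. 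The remaining three terms are bounded routinely: $\|\mu LA(\gt-P_\Sigma(\gt))\|_2\le\|\mu LA\|_{\mathrm{op}}\,d_{P_\Sigma}(\gt,\Sigma)$ by definition of $d_{P_\Sigma}$, the noise term contributes $\|\mu L e\|_2$, and $\|(I-\mu LA)R(x_n)\|_2\le\|I-\mu LA\|_{\mathrm{op}}\,\eta$ using the uniform bound on $R$.

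Collecting these estimates yields the one-step inequality $\|x_{n+1}-P_\Sigma(\gt)\|_2 \le \delta\beta\,\|x_n-P_\Sigma(\gt)\|_2 + b$, where $b := \|\mu L e\|_2 + \|\mu LA\|_{\mathrm{op}}\,d_{P_\Sigma}(\gt,\Sigma) + \|I-\mu LA\|_{\mathrm{op}}\,\eta$ is a fixed per-step offset. Unrolling this and summing the geometric series $\sum_{k=0}^{n-1}(\delta\beta)^k \le 1/(1-\delta\beta)$ produces the first claimed bound, with $C_{\mathrm{stab}}$, $C_{\mathrm{rob}}$, $C_{\mathrm{proj}}$ read off as the coefficients of the three parts of $b$ (note $\|\mu L e\|_2 = \mu\|Le\|_2$, which matches $C_{\mathrm{stab}}=\mu/(1-\delta\beta)$). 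The second inequality then follows immediately from the triangle inequality $\|x_n-\gt\|_2 \le \|x_n-P_\Sigma(\gt)\|_2 + d_{P_\Sigma}(\gt,\Sigma)$, the extra $d_{P_\Sigma}(\gt,\Sigma)$ combining with the $C_{\mathrm{rob}}\,d_{P_\Sigma}(\gt,\Sigma)$ term to give $C_{\mathrm{rob}}' = 1 + \|\mu LA\|_{\mathrm{op}}/(1-\delta\beta)$.

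I anticipate that the only genuine subtlety is the set-valued nature of $P_\Sigma$ and $P$: every inequality above must be read for a fixed selection of $P_\Sigma(x_n)$ and $R(x_n)$, which is precisely what the paper's abuse-of-notation convention licenses; once this is fixed the estimates hold uniformly in $n$ and the recursion closes. The composition of the restricted isometry and restricted Lipschitz bounds on the contractive term is the main conceptual step, but it is short — everything else is bookkeeping and summation of a geometric series.
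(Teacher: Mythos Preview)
Your proposal is correct and follows essentially the same approach as the paper's proof: the same error decomposition into a contractive secant term, a model-error term, a noise term, and an approximate-projection term; the same composition of the RIC with the restricted Lipschitz property to get the $\delta\beta$ contraction; and the same geometric-series unrolling followed by a final triangle inequality for the bound on $\|x_n-\gt\|_2$. Your remark on the set-valued subtlety and the need to fix a selection is also in line with the paper's convention.
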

	
	This theorem shows linear convergence to a set of estimates having estimation error controlled by the noise level, the model error and the approximation of the projection with the constants $C_{\mathrm{stab}}$, $C_{\mathrm{rob}}$ and $C_{\mathrm{proj}}$, respectively. Particularly, in the specific case when $L=A^T$, $d_{P_\Sigma}(\gt,\Sigma)=0$, $\eta=0$, we obtain recovery guarantees that were given in \cite{traonmilin2024towards, joundi2025stochastic}. 
	
	With Theorem~\ref{th:stability}, we remark that for a given measurement operator $A$, the restricted Lipschitz constant of the projection $P_\Sigma$ drives both the rate of convergence and identifiability of  GPGD with backprojection $L \propto A^T$. The stability constant $C_{\mathrm{stab}}$, the robustness constant $C_{\mathrm{rob}}$ and the approximate projection constant $C_{\mathrm{proj}}$  are  also increasing with respect to $\beta_\Sigma(P_\Sigma)$. We conclude that choosing $P_\Sigma$ minimizing $\beta_\Sigma(P)$ as was proposed in the noiseless case in \cite{traonmilin2024towards} is a reasonable notion of optimal projection within the class of algorithms $\{\mathcal{I}_{GBP}(P_\Sigma,L)\}$ independently of the backprojection $L$. We illustrate the importance of the restricted Lipschitz condition in the case of sparse recovery in Section~\ref{sec:exp_lip_approx} in the appendix. We also immediately remark that if we want to adapt the backprojection to improve stability to noise in $\{\mathcal{I}_{GBP}(P_\Sigma,L)\}$, we will face trade-offs in terms of convergence rate and identifiability through the RIC $\delta(\mu LA)$. Consequently, in Section~\ref{sec:noise_adapt}, we discuss how the generalized back-projection $L$ can be adapted to control the stability constant. In Section~\ref{sec:idempotent}, we show how the approximate projection error can be controlled in the training of deep projective priors with idempotent regularization. 
	
	\section{Trade-offs for stable linear recovery with GPGD with generalized backprojections} \label{sec:noise_adapt}
	
	In this section, to simplify the exposition and focus on the adaptation of GPGD to noise, we suppose that $\gt \in \Sigma$ and that $P_\Sigma$ is a restricted $\beta$-Lipschitz projection onto $\Sigma$. We discuss how the back-projection can be adapted to structured sparse noise (outliers) through the generalized back-projection $L$. Suppose $e$ is a $s$-sparse noise with unbounded energy (i.e. $\|e\|_2$ is very large). This can model saturation noise, occlusions, or dead samples in signal and image processing, where the amplitude information is completely lost or if the sensor is saturated (e.g., in very bright light conditions). In this case, we cannot hope to acceptably bound $\|Le\|_2$ if $L$ is full rank (we might be able to trade off some convergence speed for improved stability constant, see Section~\ref{sec:impact_mu} in the Appendix). However, in these cases, we can often estimate the support of the noise. For saturated pixels, we just need to select pixels equal to $1$ for images coded in $[0,1]^N$. Then, for any $L = BS$ where $S$ is a diagonal matrix in $\{ 0,1\}^{m \times m}$ that selects the complement of the support of $e$ (precisely, $S = \mathrm{diag}( \mathrm{1}_{\supp(e)^c})$) and $B \in \bR^{N\times m}$, we have
	\begin{equation}
		\|Le\|_2 = \|BSe\|_2 = 0.
	\end{equation}

	Take e.g. $L = A^TS$, we get that $\delta(LA)= \delta(A^TSA)$ and $\|Le\|_2 = \|A^TSe\|_2 = \|0\|_2 = 0 $, which gives
	\begin{equation}
		\begin{split}
			\|x_{n} -\gt\|_2& \leq (\delta(A^TSA) \beta)^{n}\|x_0-\gt\|_2
		\end{split}
	\end{equation}
	
	Thus, stable linear recovery is achieved if the operator $A^TSA$ has  RIC with  $\delta(A^TSA) \beta< 1$. This can be seen as a trade-off between stability to noise and identifiability and convergence speed as the thresholding operation removes measurements. We illustrate these results for deep projective priors (and sparse recovery in Annex~\ref{sec:tradeoff_sparse}).
	
	In Figure \ref{fig:outliers}, we show that robustness to outliers is achieved by adapting the back-projection. We use the plug-and-play approach (the approximate projection $P$ is a learned denoiser, see next sections for more details) for solving a super-resolution inverse problem for CelebA images ($A$ is a subsampling by a factor $2$). In particular, Figure \ref{fig:outliers_graph} represents the evolution of the normalized error with respect to the number of outliers $s$. For different values of the number of dead samples $s$ as represented in \ref{fig:noisy_outliers}, we solve a super-resolution inverse problem with  GPGD  with back-projection $A^TS$. We observe that when GPGD is adapted to this particular noise structure, it is possible to obtain a robust estimation of the ground truth which is not the case without. In addition to that, graph \ref{fig:outliers_graph} shows that when the number of outliers increases beyond $2000$ we observe a phase transition where robustness is no longer observed, a phenomenon well predicted by our theoretical findings and also observed in the case of classical sparse recovery (see Section~\ref{sec:tradeoff_sparse} in the appendix).
	\begin{figure}[h!]
		\centering
		\begin{subfigure}[t]{0.20\textwidth}
			\centering
			\includegraphics[width=\textwidth]{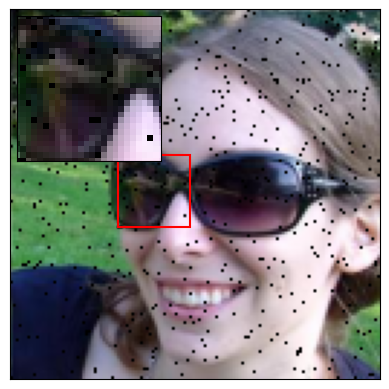}
			\subcaption{Noisy image with 2000 dead pixels and a sub-sampling of factor 2}\label{fig:noisy_outliers}
		\end{subfigure}
		\hspace{0.1cm}
		\begin{subfigure}[t]{0.20\textwidth}
			\centering
			\includegraphics[width=\textwidth]{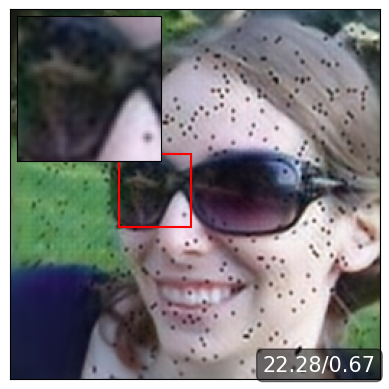}
			\subcaption{GPGD recovery without adapted back-projection}\label{fig:pgd_outliers}
		\end{subfigure}
		\hspace{0.02cm}
		\begin{subfigure}[t]{0.20\textwidth}
			\centering
			\includegraphics[width=\textwidth]{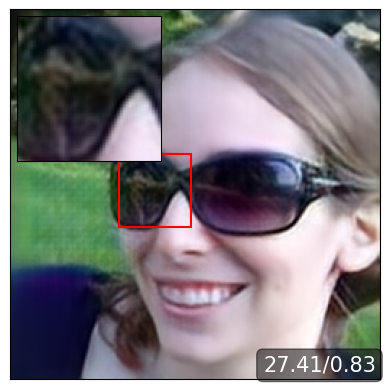}
			\subcaption{GPGD recovery with adapted back-projection}\label{fig:pgd_outliers}
		\end{subfigure}
		\hspace{0.02cm}
		\begin{subfigure}[t]{0.35\textwidth}
			\centering
			\includegraphics[width=\textwidth]{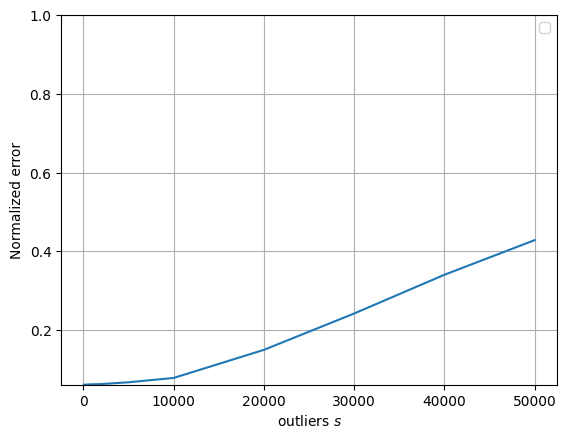}
			\subcaption{Evolution of the normalized error for super-resolution $\times 2$ inverse problem with the number of outliers $s$. }\label{fig:outliers_graph}
		\end{subfigure}

		\caption{Adaptation to sparse noise: Illustration of the trade-off between stability to noise sparsity and identifiability of $\Sigma$. We show the Normalized error bound for 90$\%$ of the experiments with respect to the sparsity $s$ of outliers. The sparser the noise is, the greater the identifiability.}\label{fig:outliers}
	\end{figure}
	
	\section{Mitigating the effect of approximate projections with normalized idempotent regularization}\label{sec:idempotent}
	
	When training a projective prior $P_\Sigma$ into the form of an auto-encoder or a denoiser (plug-and-play), we showed that it is sufficient to control the approximate projection constant $C_{\mathrm{proj}}$ to guarantee approximate stable and robust recovery. Given $X$ a database, deep projective priors can be trained with the following classical loss functions: 
	\begin{equation}\label{eq:def_loss}
		\begin{split}
			\mathcal{L}_{X,\text{AE}}(P) &=\mathcal{L}_{X,\text{AE}}(f_D\circ f_E):=\sum_{x\in X} \|f_D\circ f_E(x)-x\|_2^2,~\text{for an autoencoder},\\
			~ \mathcal{L}_{X,\text{PnP}}(P) &=\mathcal{L}_{X,\text{PnP}}(D):=\mathbb{E}_{x\in X, \varepsilon\sim\mathcal{N}(0,\xi^2\mathbf{I})}\left(\|D(x+\varepsilon)-x\|_2^2\right),~\text{for a denoiser}.
		\end{split}
	\end{equation} 
	With these loss functions, it is typically observed that $P \circ P (x) \neq P(x)$. The idempotent property that $P \circ P = P$ is a necessary condition for $P$ to have the restricted Lipschitz property~\cite{traonmilin2024towards}. For some projective priors, this problem often leads to instabilities near convergence. 
	
	In \cite{joundi2025stochastic}, it was proposed to use a stochastic orthogonal regularization (SOR) for the training of deep projective priors to control the restricted Lipschitz constant by trying to learn an approximate orthogonal projection. SOR showed improved convergence speed and identifiability properties for very ill-posed inverse problems. However, while efficient with an oracle stopping criterion, the corresponding regularized projective priors still show some instabilities near convergence of GPGD (as shown in experiments). We propose to explore the effect of idempotent regularization of deep projective priors for the stability of GPGD. Note that this regularization has been proposed in the context of generative modeling in \cite{shocher2023idempotent} to interpret the generative process as a projection. Our objective is to approximate an idempotent projection during training to better control the projection error term in Theorem~\ref{th:stability}. We define the Normalized Idempotent Regularization (NIPR) as:
	\begin{equation}
		\mathcal{R}_X(P) = \sum_{x\in X} \frac{\|P\circ P(x)-P(x)\|_2}{\|P(x)\|_2},
	\end{equation}
	and the regularized loss functions
	\begin{equation}\label{eq:def_loss_reg}
		\begin{split}
			\mathcal{L}_{X,\text{AE}}^{\mathrm{reg}}(P) &=	\mathcal{L}_{X,\text{AE}}(P) + \lambda 	\mathcal{R}_X(P);\\
			\mathcal{L}_{X,\text{PnP}}^{\mathrm{reg}}(P) &=\mathcal{L}_{X,\text{PnP}}(P) + \lambda 	\mathcal{R}_X(P).\\
		\end{split}
	\end{equation} 
	
	Note that compared to \cite{shocher2023idempotent}, we normalize the idempotent criterion to avoid a bias towards $P(x)$ of low energy. Indeed, given a linear generalized projection, consider the family of functions $\alpha P$ with $\alpha > 0$, we have that 
	\begin{equation}
		\|(\alpha P)\circ (\alpha P)(x)- \alpha P(x)\|_2 = \alpha \|\alpha P\circ P(x)- P(x)\|_2 \to_{\alpha \to 0} 0. 
	\end{equation}
	Hence, this idempotent criterion without normalization can be made arbitrarily small by projections with small norm. 
	
	\textbf{Experiments} We now apply our NIPR to a deep denoising neural network on the CelebA dataset \cite{liu2015faceattributes} trained with a PnP loss. We train a DRUNET denoiser \cite{zhang2021plug}, a U-net combined with skip connections on a dataset of size 10000. The images represent RGB faces of size 256x256. We considered two denoisers: one without NIPR and another with NIPR (with $\lambda=0.005$). The NIPR is computed over the same batch of images as the one used from the denoising loss $\mathcal{L}_{X,\text{PnP}}$. After the training, we solve a super-resolution inverse problem by a factor of 2. We consider two image restoration algorithms: the classical unconstrained GPGD and GPGD using a regularized denoiser with NIPR. The main baseline here is PGD. Additionally, as a comparison reference, we compute the stability of the Stochastic Orthogonal Regularization (SOR \cite{joundi2025stochastic}) added to a third DRUNET, as done for NIPR.
	We consider for our experiments a test set of 50 images from CelebA. To compare the recovery, we use the Peak-Signal-to-Noise-Ratio (PSNR) and the Structural Similarity Index Measure (SSIM) of recovered images. The reader can find in the appendix additional experiments on inpainting and deblurring inverse problems. We also test our regularization over an autoencoder  DPP trained on the MNIST dataset. 
	
	To study the impact of NIPR, we define two metrics that assess it.  The goal in using these two stability metrics is to cover several instability cases \textit{e.g.} to assess whether the quantity $\|x-\hat{x}\|$ is diverging or oscillating after reaching the optimal value. 
	
	The Stability Metric 1 (SM1) is defined as \begin{equation}
		\text{SM1}(\hat{x},n)=\underset{i_{\min +1}\leq i\leq i_{\min}+n}{\max}\left(\frac{\|x_i-\hat{x}\|_2}{\|x_{\min}-\hat{x}\|_2}-1\right), 
	\end{equation}
	where $\hat{x}$ is the ground truth, $x_i$ the solution at iteration $i$, $x_{\min}$ the optimal one at $i_{\min}$ and $n$ the length of the interval where we want to compute the stability. This metric captures the total deviation given a number of iterations after reaching the optimal estimate.
	
	The  Stability Metric 2 (SM2) is defined as \begin{equation}
		\text{SM2}(\hat{x},n)=\sum_{i=i_{\min}+1}^{ i_{\min}+n}{\frac{\|x_{i+1}-x_i\|_2}{\|x_i\|_2}}.
	\end{equation}
	This metric captures oscillation phenomenon after reaching the optimal estimate.
	
	Note that generally $i_{\min}$ is an oracle minimizing $\|x_i-\hat{x}\|_2$ that is not available in a real situation. Hence, controlling stability metrics can help making stopping criteria for GPGD more stable.
	
	Table \ref{tab:supres_fig} represents the PSNR and the SSIM of the recovered images from a super-resolution inverse problem for the three aforementioned methods. The stability metrics SM1 and SM2 are averaged through the test set and are computed at different iterations: $n=\{i_{\min}+10, i_{\min}+50, i_{\min}+100\}$, where $i_{\min}$ is the iteration of the optimal solution. Figures \ref{fig:supres_pgd_graph_n_visu} and \ref{fig:supres_nipr_graph_n_visu} represent the evolution of the quantity $\|x-\hat{x}\|$ for all the tests for the considered method. The graph is accompanied by visual results at different iterations: $i_{\min}+10,~i_{\min}+50,~i_{\min}+100$.
	\begin{table}[h!]	
		\caption{PSNRs and stability values for a super-resolution inverse problem using GPGD without regularization, NIPR and SOR. While recovering the original image  correctly, NIPR is clearly more stable after reaching $x_{\min}$ compared to PGD and SOR.}\label{tab:supres_fig}
		\begin{minipage}{0.48\textwidth}
			\centering
			\begin{tabular}{ccccccccc}
				\toprule
				\multirow{2}{*}{Method} & \multirow{2}{*}{PSNR$\uparrow$} & \multirow{2}{*}{SSIM$\uparrow$} & \multicolumn{3}{c}{SM1 $\downarrow$}  & \multicolumn{3}{c}{SM2 $\downarrow$}  \\
				
				&            &            & $i_{\min}+10$ & $i_{\min}+50$ & $i_{\min}+100$ & $i_{\min}+10$ & $i_{\min}+50$ & $i_{\min}+100$ \\
				\hline
				No reg.           & \textbf{34,160}        & \textbf{0,933}         & 0,0510  & 3,9210  & 10,2517  & 0,018   & 0,257   & 0,799   \\
				NIPR          & 33,346        & 0,913         & \textbf{0,0109}  & \textbf{0,2487}  & \textbf{0,7732}   & \textbf{0,014}   & \textbf{0,045}   & \textbf{0,095}   \\
				SOR           & 32,855        & 0,913         & 0,0632  & 3,4256  & 18,1521  & 0,020   & 0,254   & 1,145 \\
				\hline   
			\end{tabular}
		\end{minipage}
	\end{table}
	According to table \ref{tab:supres_fig}, each method recovers the original images correctly. As it is more constrained, NIPR performs slightly worse than GPGD without regularization but still has acceptable performance. However, when comparing the stability metrics, we see clearly that NIPR has a significant impact. In fact, for both SM1 and SM2, deviations are limited for NIPR, whereas they increase significantly through the iterations for GPGD with no regularization and SOR. In particular, this result shows that NIPR has, in fact, guaranteed a more stable convergence, compared to the two other methods. It is interesting to note that even though SOR reaches the optimal solution faster, it becomes less stable compared to other methods in return. Visual results in figures \ref{fig:supres_pgd_graph_n_visu} and \ref{fig:supres_nipr_graph_n_visu} confirm our observations.  GPGD without regularisation diverges quickly after reaching $x_{\min}$, thus producing heavily degraded images. Conversely, NIPR still produces acceptable recoveries even if we are far from the optimal solution.
	
	\begin{figure}[h!]
		\centering
		\begin{subfigure}[t]{0.36\textwidth}
			\centering
			\includegraphics[width=\textwidth]{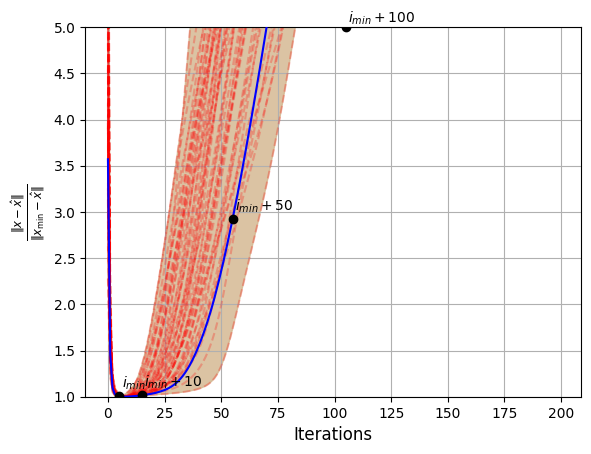}
			\subcaption{Evolution of $\frac{\|x_i-\hat{x}\|}{\|x_{\min}-\hat{x}\|}$ }
		\end{subfigure}
		\hspace{0.5cm}
		\begin{subfigure}[t]{0.18\textwidth}
			\centering
			\includegraphics[width=\textwidth]{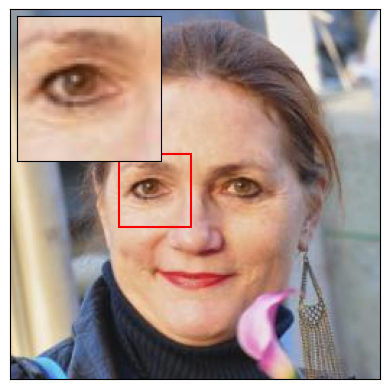}
			\subcaption{Ground truth}\label{fig:supres_pgd}
		\end{subfigure}
		\hspace{0.02cm}
		\begin{subfigure}[t]{0.18\textwidth}
			\centering
			\includegraphics[width=\textwidth]{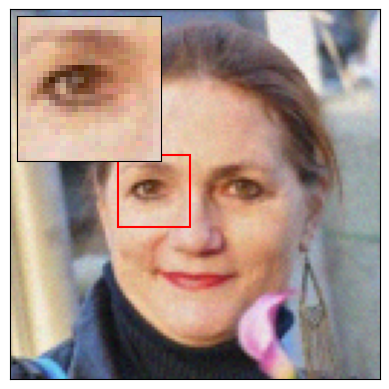}
			\subcaption{Observed}
		\end{subfigure}
		
		\vspace{0.5cm}
		
		\begin{subfigure}[t]{0.18\textwidth}
			\centering
			\includegraphics[width=\textwidth]{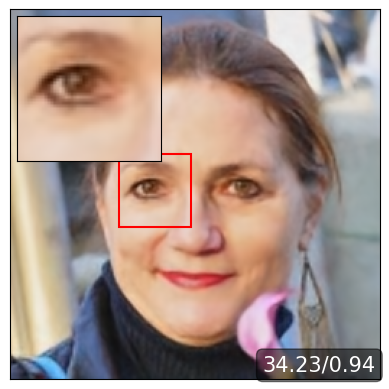}
			\subcaption{$x$ at $i_{\min}$ ($x_{\min}$)}
		\end{subfigure}
		\hspace{0.02cm}
		\begin{subfigure}[t]{0.18\textwidth}
			\centering
			\includegraphics[width=\textwidth]{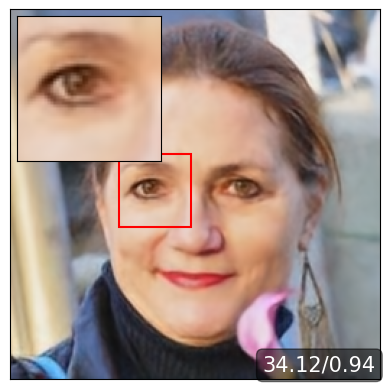}
			\subcaption{$x$ at $i_{\min}+10$}
		\end{subfigure}
		\hspace{0.02cm}
		\begin{subfigure}[t]{0.18\textwidth}
			\centering
			\includegraphics[width=\textwidth]{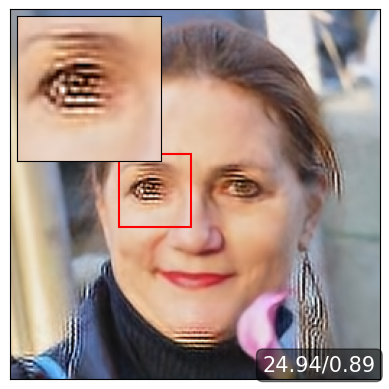}
			\subcaption{$x$ at $i_{\min}+50$}
		\end{subfigure}
		\hspace{0.02cm}
		\begin{subfigure}[t]{0.18\textwidth}
			\centering
			\includegraphics[width=\textwidth]{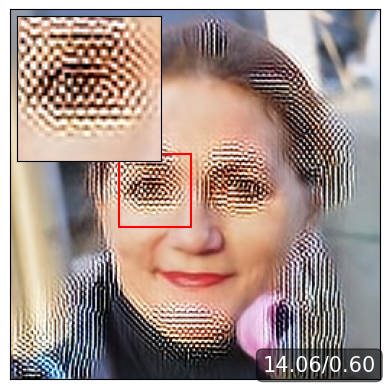}
			\subcaption{$x$ at $i_{\min}+100$}
		\end{subfigure}
		\caption{Super-resolution of images using GPGD without regularization. (a) represents recovery error for 50 images. The blue curve is associated with the recovery of (b). The quantity $x-\hat{x}$ quickly diverges after reaching the optimal solution and the images become unusable.}\label{fig:supres_pgd_graph_n_visu} \vspace{-3mm}
	\end{figure}
	
	\begin{figure}[h!]
		\centering	
		\begin{subfigure}[t]{0.36\textwidth}
			\centering
			\includegraphics[width=\textwidth]{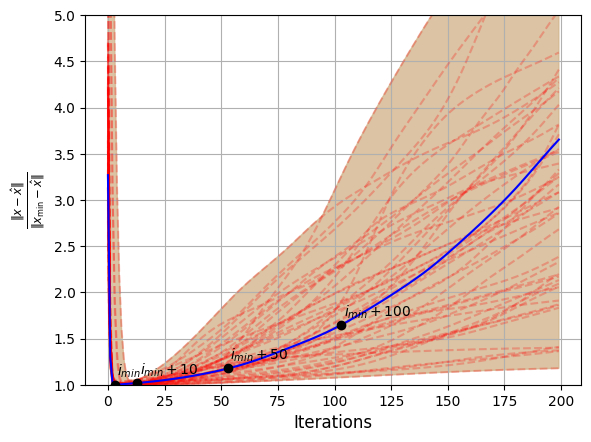}
			\subcaption{Evolution of $\frac{\|x_i-\hat{x}\|}{\|x_{\min}-\hat{x}\|}$}
		\end{subfigure}
		\hspace{0.5cm}
		\begin{subfigure}[t]{0.18\textwidth}
			\centering
			\includegraphics[width=\textwidth]{Experiments/SUPRES/39.png}
			\subcaption{Ground truth}\label{fig:supres_nipr}
		\end{subfigure}
		\hspace{0.02cm}
		\begin{subfigure}[t]{0.18\textwidth}
			\centering
			\includegraphics[width=\textwidth]{Experiments/SUPRES/39_Noisy_2_0.7_0.02.png}
			\subcaption{Observed}
		\end{subfigure}
		
		\vspace{0.5cm}
		
		\begin{subfigure}[t]{0.18\textwidth}
			\centering
			\includegraphics[width=\textwidth]{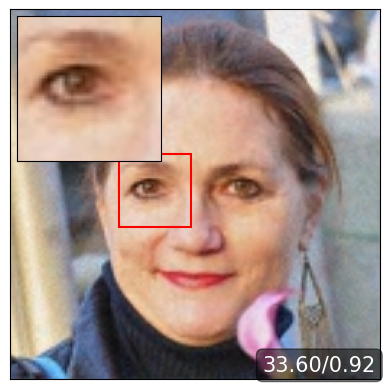}
			\subcaption{$x$ at $i_{\min}$ ($x_{\min}$)}
		\end{subfigure}
		\hspace{0.02cm}
		\begin{subfigure}[t]{0.18\textwidth}
			\centering
			\includegraphics[width=\textwidth]{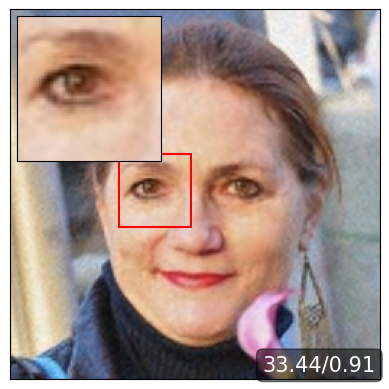}
			\subcaption{$x$ at $i_{\min}+10$}
		\end{subfigure}
		\hspace{0.02cm}
		\begin{subfigure}[t]{0.18\textwidth}
			\centering
			\includegraphics[width=\textwidth]{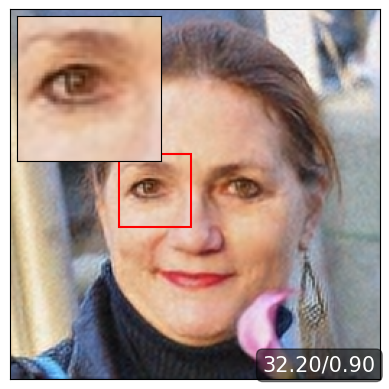}
			\subcaption{$x$ at $i_{\min}+50$}
		\end{subfigure}
		\hspace{0.02cm}
		\begin{subfigure}[t]{0.18\textwidth}
			\centering
			\includegraphics[width=\textwidth]{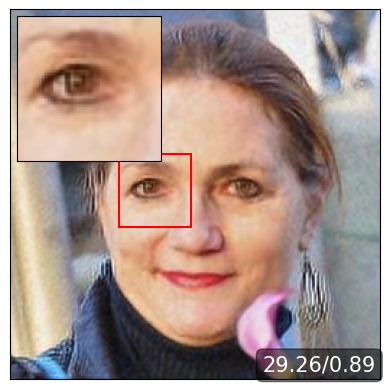}
			\subcaption{$x$ at $i_{\min}+100$}
		\end{subfigure}
		
		\caption{Super-resolution of images using NIPR.  (a) represents recovery error for 50 images. The blue curve is associated with the recovery of (b). The quantity $x-\hat{x}$ is slowly diverging after reaching the optimal solution. Yet the obtained images are still recovering the original image correctly. }\label{fig:supres_nipr_graph_n_visu}\vspace{-3mm}
	\end{figure}

	\section{Conclusion}
	We presented an extended convergence analysis of the generalized projected gradient descent algorithm by taking into account generalized backprojections, model error and projection error. The result exposes in particular how each of these factors are affected by the restricted Lipschitz constant. We showed how we can adapt the backprojection to adapt GPGD to structured noise and what trade-offs result from this adaptation. To control the projection error, we proposed a Normalized IdemPotent Regularization (NIPR) on Deep projective priors improving experimentally the stability achieved by such GPGD methods. 
	
	The results presented in this article lead to the following future works. First, automatically estimating the back-projection for different structured noises and understanding how theoretical guarantees are affected would be a natural extension of adaptation to sparse noise. Second, understanding if we could use jointly idempotent regularization, and other regularizations such as stochastic orthogonal regularization, would also be interesting.
	
	\bibliographystyle{abbrv}
	\bibliography{stable_recovery}
	
	\appendix

	\section{Proof of  Theorem~\ref{th:stability}}
	
	\begin{proof}[Proof of Theorem~\ref{th:stability}]
		For any $n$, we bound the quantity
		\begin{equation}
			\begin{split}
				\|x_{n+1} -P_\Sigma(\gt)\|_2&= \| P(x_n)- \mu L(AP(x_n)-y) -P_\Sigma(\gt) \|_2\\
				&= \| P(x_n) - \mu L A(P(x_n)-\gt) +\mu L e -P_\Sigma(\gt)\|_2\\
				&= \| P(x_n) - \mu L A(P(x_n)-P_\Sigma(\gt) +P_\Sigma(\gt)-\gt) +\mu L e -P_\Sigma(\gt)\|_2\\
				&=\| (I- \mu L A)(P(x_n)-P_\Sigma(\gt))+ \mu L A(\gt - P_\Sigma(\gt)) +\mu L e  \|_2\\
				&=\| (I- \mu L A)(P(x_n)- P_\Sigma(x_n))\\
				&+(I- \mu L A)(P_\Sigma(x_n)-P_\Sigma(\gt))+ \mu L A(\gt - P_\Sigma(\gt)) +\mu L e  \|_2\\
			\end{split}
		\end{equation}
		With the triangle inequality and the RIC, we have
		\begin{equation}
			\begin{split}
				\|x_{n+1} -P_\Sigma(\gt)\|_2 &\leq \| (I- \mu LA)(P_\Sigma(x_n)-P_\Sigma(\gt))\|_2 \\
				&+\|(I- \mu LA)(P(x_n)- P_\Sigma(x_n)) + \mu LA(\gt - P_\Sigma(\gt)) +\mu Le \|_2\\
				&\leq \delta\| P_\Sigma(x_n)-P_\Sigma(\gt)\|_2+\|(I- \mu LA)(P(x_n)- P_\Sigma(x_n)) \|_2\\
				&+\| \mu LA(P_\Sigma(\gt) -\gt)\|_2 +\| \mu Le  \|_2\\
			\end{split}
		\end{equation}
		
		With the restricted $\beta$-Lipschitz condition of $P_\Sigma$ (which implies $P_\Sigma(P_\Sigma(x)) = P_\Sigma(x)$, see \cite{traonmilin2024towards}), we have
		\begin{equation}
			\begin{split}
				\|x_{n+1} -P_\Sigma(\gt)\|_2 &\leq \delta\beta\|x_n-P_\Sigma(\gt)\| +\| I-\mu LA\|_{\mathrm{op}}\|P(x_n)- P_\Sigma(x_n) \|_2\\
				&+ \| \mu L A\|_{\mathrm{op}}\|P_\Sigma(\gt) -\gt\|_2 +\| \mu L e  \|_2\\
			\end{split}
		\end{equation}
		As $P(x_n)- P_\Sigma(x_n) = R(x_n)$, by hypothesis on $R$ and by definition of $d_{P_\Sigma}(\gt,\Sigma)$, we obtain
		\begin{equation}\label{eq:rec3b}
			\begin{split}
				\|x_{n+1} -P_\Sigma(\gt)\|_2 &\leq \delta\beta\|x_n-P_\Sigma(\gt)\| +\| I-\mu LA \|_{\mathrm{op}}\eta+ \| \mu L A\|_{\mathrm{op}}d_{P_\Sigma}(\gt,\Sigma) +\| \mu L e  \|_2\\
			\end{split}
		\end{equation}
		Let $\xi = \| I-\mu LA \|_{\mathrm{op}}\eta+ \| \mu L A\|_{\mathrm{op}}d_{P_\Sigma}(\gt,\Sigma) +\| \mu L e  \|_2 $. We show by induction
		\begin{equation}
			\begin{split}
				\|x_{n+1} -P_\Sigma(\gt)\|_2&\leq (\delta \beta)^{n+1}\|x_0-P_\Sigma(\gt)\|_2+\left(\sum_{i=0}^{n}(\delta \beta)^i\right) \xi \\
			\end{split}
		\end{equation}
		
		For $n=0$, this is exactly~\eqref{eq:rec3b}.

		Suppose step $n$ true, with \eqref{eq:rec3b} (at step $n+1$), we have:
		\begin{equation}
			\begin{split}
				\|x_{n+2} -P_\Sigma(\gt)\|_2
				& \leq \delta \beta \left((\delta \beta)^{n+1}\|x_0-P_\Sigma(\gt)\|_2+\left(\sum_{i=0}^{n}(\delta \beta)^i\right)\xi \right) +\xi\\
				& = (\delta \beta)^{n+2}\|x_0-P_\Sigma(\gt)\|_2+\left(\sum_{i=1}^{n+1}(\delta \beta)^{i}\right) \xi + \xi \\
				&= (\delta \beta)^{n+2}\|x_0-P_\Sigma(\gt)\|_2+\left(\sum_{i=0}^{n+1}(\delta \beta)^{i}\right)\xi.\\
			\end{split}
		\end{equation}
		
		This shows the induction.
		
		We also have for any $n$, when $\delta\beta <1$,
		
		\begin{equation}
			\begin{split}
				\|x_{n} -\gt\|_2&\leq \|x_{n} -P_\Sigma(\gt) + P_\Sigma(\gt) -\gt\|_2\\
				&\leq \|x_{n} -P_\Sigma(\gt)\|_2 + \| P_\Sigma(\gt) -\gt\|_2 \\
				& \leq  (\delta \beta)^{n}\|x_0-P_\Sigma(\gt)\|_2+\left(\sum_{i=0}^{n-1}(\delta \beta)^{i}\right)\xi +d(\gt,\Sigma)\\
				& \leq (\delta \beta)^{n}\|x_0-P_\Sigma(\gt)\|_2 + \frac{\| (I-\mu L A)\|_{\mathrm{op}}\eta}{1-\delta \beta}+ \left(1+\frac{\|\mu L A)\|_{\mathrm{op}}}{1-\delta \beta}\right)d(\gt,\Sigma)\\ 
				&+\frac{\|\mu L e  \|_2}{1-\delta \beta}
			\end{split}
		\end{equation}
		which concludes the proof.
	\end{proof}

	\section{A numerical illustration of the importance of restricted Lipschitz for stable recovery}\label{sec:exp_lip_approx}
	
	We consider the problem of sparse recovery, i.e. $\Sigma = \Sigma_k$ the set of $k$-sparse vectors with PGD. It has been shown that the orthogonal projection (which amounts to performing iterative hard thresholding) is restricted Lipschitz with constant $\beta= \sqrt{\frac{3 + \sqrt{5}}{2}}$ which is close to optimal for restricted Lipschitz projections onto $\Sigma_k$ \cite{traonmilin2024towards}. We propose to consider projections $P_\alpha$ that deteriorate the restricted Lipschitz constant of the orthogonal projection. We define, for any $z \in \bR^N$:
	\begin{equation} \label{eq:def_P_alpha}
		P_\alpha(z):= 
		\left\{
		\begin{array}{l}
			\left(1 + \alpha \frac{\|z-P_\Sigma^\perp(z)\|_2}{\|P_\Sigma^\perp(z)\|_2}\right) P_\Sigma^\perp(z) \ \mbox{ if $P_\Sigma^\perp(z) \neq 0$.}
			\\
			0 \ \mbox{ otherwise.}
		\end{array}
		\right.
	\end{equation}
	where the orthogonal projection is defined in Definition~\ref{def:orth_proj} (and is the hard thresholding operator when the model is $\Sigma_k$). 
	
	\begin{lemma} \label{lem:approx_orth_proj}
		Let $\Sigma = \Sigma_k$. Consider $P_\alpha$ defined in~\eqref{eq:def_P_alpha} then 
		\begin{equation}
			\beta_{\Sigma}(P_\alpha) \leq \beta_{\Sigma}(P_{\Sigma}^\perp) + \alpha.
		\end{equation}
	\end{lemma}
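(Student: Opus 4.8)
The plan is to estimate $\|P_\alpha(z) - x\|_2$ for an arbitrary $z \in \bR^N$ and $x \in \Sigma_k$ by comparing $P_\alpha(z)$ first to the orthogonal projection and then to $x$, via two triangle inequalities. I write $p := P_\Sigma^\perp(z)$ and $d := \|z - p\|_2$, noting that $d$ equals the distance from $z$ to the model $\Sigma_k$. The key observation is that $P_\alpha$ perturbs $p$ only radially: from the definition~\eqref{eq:def_P_alpha}, when $p \neq 0$ we have $P_\alpha(z) - p = \alpha \frac{d}{\|p\|_2}\, p$, so that $\|P_\alpha(z) - p\|_2 = \alpha d$ exactly. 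This single algebraic identity drives the whole estimate, and it is precisely what the normalization by $\|P_\Sigma^\perp(z)\|_2$ in the definition of $P_\alpha$ is designed to produce.

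Next I would apply the triangle inequality $\|P_\alpha(z) - x\|_2 \leq \|P_\alpha(z) - p\|_2 + \|p - x\|_2 = \alpha d + \|P_\Sigma^\perp(z) - x\|_2$. The second term is controlled directly by the restricted Lipschitz property of the orthogonal projection (Definition~\ref{def:lip_const}), which gives $\|P_\Sigma^\perp(z) - x\|_2 \leq \beta_\Sigma(P_\Sigma^\perp)\,\|z - x\|_2$. For the first term I use that $p$ is a closest point of $\Sigma_k$ to $z$, so $d \leq \|z - x\|_2$ because $x \in \Sigma_k$. Combining the two bounds yields $\|P_\alpha(z) - x\|_2 \leq \bigl(\beta_\Sigma(P_\Sigma^\perp) + \alpha\bigr)\,\|z - x\|_2$, and since this holds for every $z$, every $x \in \Sigma_k$, and every selection from the (set-valued) projection, taking the infimum over admissible Lipschitz constants gives the claim.

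Finally I would dispose of the degenerate case $p = P_\Sigma^\perp(z) = 0$, in which $P_\alpha(z) = 0 = P_\Sigma^\perp(z)$ by definition; there the target inequality reduces to the restricted Lipschitz bound for $P_\Sigma^\perp$ itself, which is already implied by $\beta_\Sigma(P_\Sigma^\perp) + \alpha$, so nothing further is required.

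I do not expect a genuine obstacle: the argument is a two-step triangle inequality, and the only points requiring care are verifying that the radial perturbation has norm \emph{exactly} $\alpha\, d(z,\Sigma_k)$ and that this distance is dominated by $\|z - x\|_2$ for the comparison point $x \in \Sigma_k$. The mild subtlety is tracking the set-valued nature of both $P_\Sigma^\perp$ and $P_\alpha$, but since the bound is uniform over all selections it passes through unchanged.
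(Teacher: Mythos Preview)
Your proposal is correct and follows essentially the same route as the paper's proof: triangle inequality through $P_\Sigma^\perp(z)$, the exact identity $\|P_\alpha(z)-P_\Sigma^\perp(z)\|_2=\alpha\,\|z-P_\Sigma^\perp(z)\|_2$, the restricted Lipschitz bound for $P_\Sigma^\perp$, and the distance comparison $\|z-P_\Sigma^\perp(z)\|_2\leq\|z-x\|_2$. You are in fact slightly more careful, since you explicitly dispose of the degenerate case $P_\Sigma^\perp(z)=0$, which the paper leaves implicit.
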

	
	\begin{proof}[Proof of Lemma~\ref{lem:approx_orth_proj}]
		For any $z$, $x\in \Sigma$, we have, by definition of the restricted $\beta$-Lipschitz property (of $P_\Sigma^\perp$)
		
		\begin{equation}
			\begin{split}
				\|P_\alpha(z) -x\|_2 &\leq \|P_\alpha(z) -P_\Sigma^\perp(z)\|_2 +\|P_\Sigma^\perp(z) -x\|_2 \\
				&\leq \alpha \| \frac{\|z-P_\Sigma^\perp(z)\|_2}{\|P_\Sigma^\perp(z)\|_2} P_\Sigma^\perp(z)\| + \beta_\Sigma(P_\Sigma^\perp)\|z -x\|_2 \\
				&= \alpha \|z-P_\Sigma^\perp(z)\|_2+ \beta_\Sigma(P_\Sigma^\perp)\|z -x\|_2\\
			\end{split}
		\end{equation}
		By definition of $P_\Sigma^\perp(z)$, as $x\in\Sigma$, we have $\|z-P_\Sigma^\perp(z)\|_2 \leq \|z-x\|_2$ and 
		
		\begin{equation}
			\begin{split}
				\|P_\alpha(z) -x\|_2 &= (\alpha + \beta_\Sigma(P_\Sigma^\perp))\|z -x\|_2.\\
			\end{split}
		\end{equation}
	\end{proof}
	
	We remark that $\beta_\Sigma(P_\alpha) \to_{\alpha\to 0} \beta_\Sigma(P_\Sigma^\perp)$, and that we control the Lipschitz constant $P_\alpha$ with the parameter $\alpha$.
	
	In Figure~\ref{fig:app_iht} (top), we perform stable sparse recovery experiments with different sparsities, with IHT (PGD with $P_0 = P_\Sigma^\perp$) and PGD with $P_\alpha$ ($\alpha\neq 0$). For each considered sparsity of $\gt$, we perform $50$ experiments and plot the normalized $\ell^2$ reconstruction error (thresholded by $1$) of the $95\%$ centile. We observe that increasing $\alpha$ and thus degrading $\beta$ diminishes the identifiability properties of PGD with $P_\alpha$.
	
	\begin{figure}[h!]
		\centering
		\includegraphics[width=.75\linewidth]{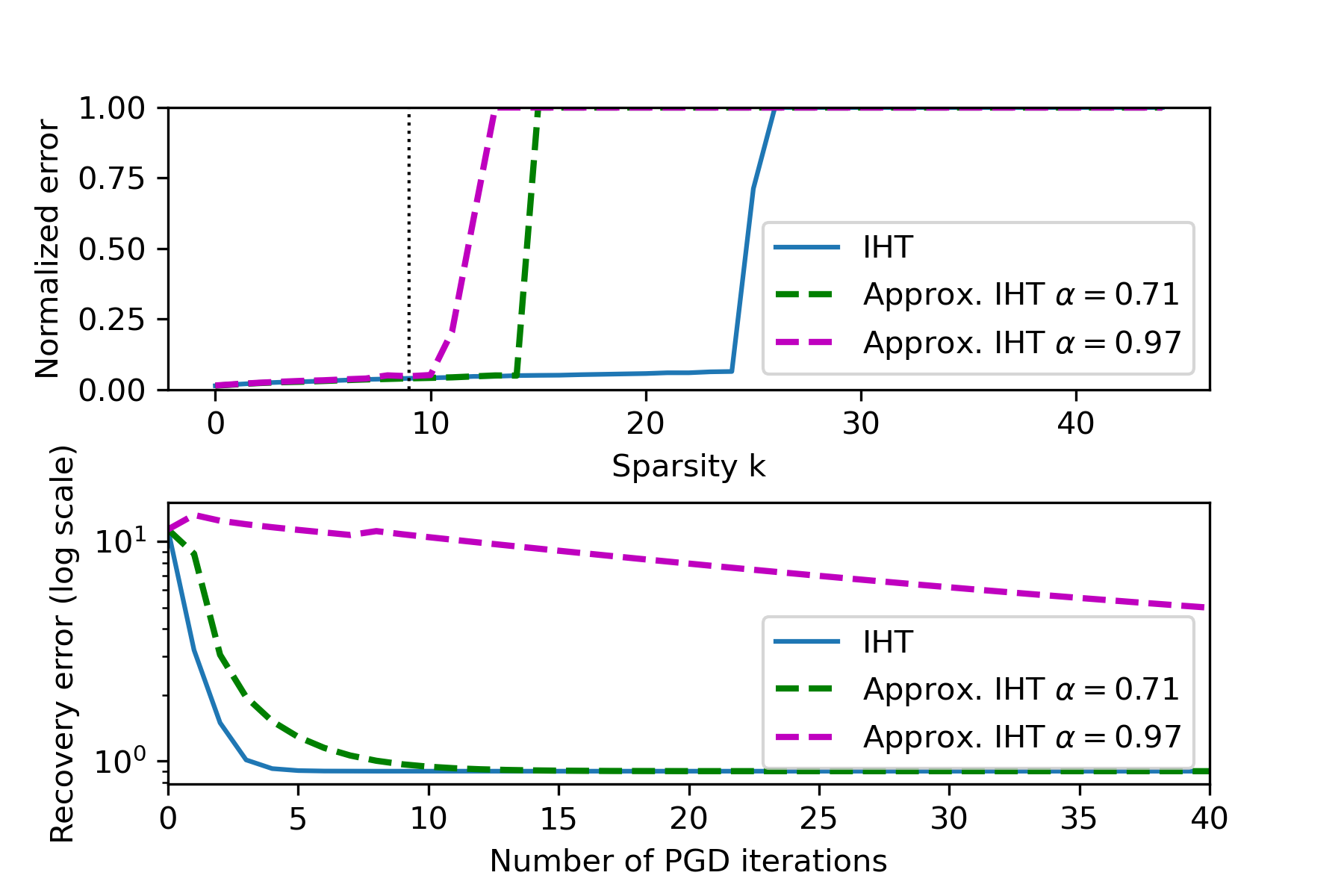}
		\caption{Importance of the restricted Lipschitz constant for stable recovery: the case of sparse recovery. Top: Normalized error bound for 95$\%$ of the experiments with respect to the sparsity $k$ of the unknown. Bottom: convergence for one experiment with $k=9$. We observe that worsening the Lipschitz constant (through the parameter $\alpha$) deteriorates both the convergence rate and the identifiability properties of the algorithm.}
		
		\label{fig:app_iht}
	\end{figure}
	
	For a given sparsity, where we observe stable recovery for all $\alpha$ (Figure~\ref{fig:app_iht} (bottom)), we show the convergence of the different PGD algorithms for a fixed step $\mu$ (chosen to be the largest to obtain convergence of IHT). We observe that the convergence rate is decreased when $\alpha$ increases, thus matching Theorem~\ref{th:stability}. Note that the stability is not changed in these experiments. We attribute this to the fact that the support of $\gt$ is necessarily identified for stable recovery, leading to $P_\alpha(z) \approx P_\Sigma^\perp(z)$ when $z$ is close to $\Sigma$.

	\section{Impact of the step size $\mu$ on the stability constant} \label{sec:impact_mu}
	Let us consider the case $L = A^T$. In this case, we have $C_{\mathrm{stab}} := \frac{\mu}{1-\delta(\mu A^TA) \beta} $. As $\delta(\mu A^TA)$ can be interpreted as an operator norm of $I-\mu A^T A $ \emph{restricted} to the low-dimensional model $\Sigma$, it is not clear how $\delta(\mu A^TA)$ behaves with respect to $\mu$ (except that $\lim_{\mu\to0}\delta(\mu A^TA) =1$ and that we do not verify convergence hypotheses for small $\mu$).  In the context of sparse recovery where PGD with the orthogonal projection is  Iterative Hard Tresholding, we illustrate a trade-off between convergence speed and quality of recovery. Indeed, while we generally look a the largest possible $\mu$ for fast convergence (when interpreted as a gradient step), we observe that lowering $\mu$ can improve stability at the expense of identifiability and convergence speed. In Figure~\ref{fig:influence_mu}, we represent the recovery error with respect to sparsity of the worst 10th centile for sparse recovery ($m=150,n=300$, noisy random Gaussian measurements with fixed noise variance). We plot the convergence of recovery error $\|x^*-\hat{x}\|_2$ for a fixed sparsity. We observe that the best $\mu$ for stable recovery $\mu =0.6$ allows for stable recovery of sparsities $\leq 15$ while the best $\mu = 0.3$ improves noise stability for sparsity $k=4$ at the expense of reduced identifiability and convergence speed.

	\begin{figure}[h]
		\centering
		\includegraphics[width=.85\linewidth]{./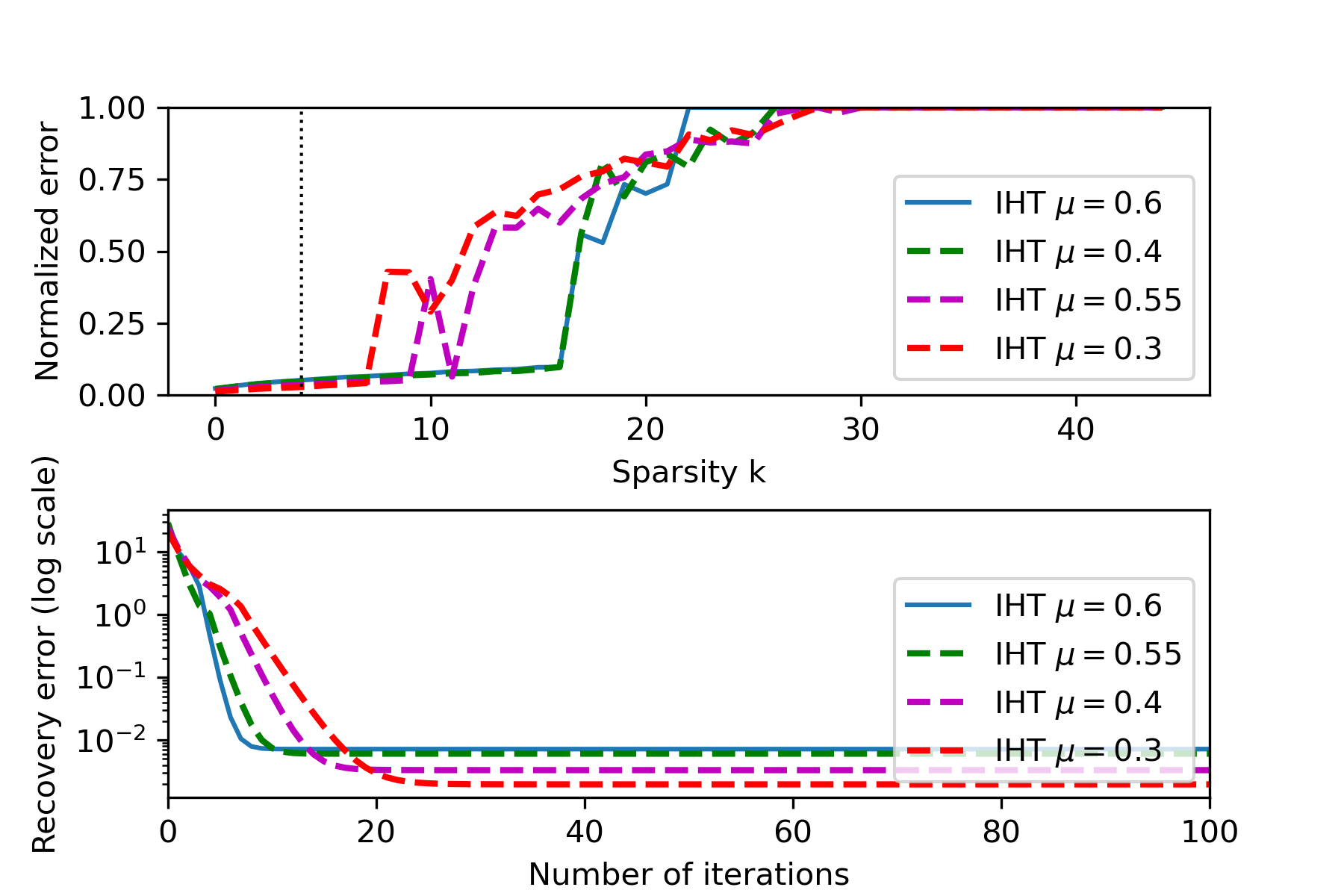}
		\caption{Impact of the step size on the stability of PGD for sparse recovery. Top: phase transition diagram for stable recovery. Bottom: impact of $\mu$ on convergence for $k=4$}
		
		\label{fig:influence_mu}
	\end{figure}
	
	\section{Trade-offs for stable linear sparse recovery with GPGD (IHT)}\label{sec:tradeoff_sparse}
	
	In the context of sparse recovery and random Gaussian measurements, a restricted isometry property of $A^TA$ is guaranteed with high probability under the condition that the number of measurements $m \geq C k \log(n/k)$ for some potentially large constant $C$. As $S$ selects $m-s$ measurements in $A$, we have that $SA$ has a restricted isometry with high probability if $m \geq s+ Ck \log(m/k)$. Qualitatively, fast stable recovery with iterative hard thresholding is possible if the number of measurements is $O(s+k)$ and the trade-off between identifiability of sparse vectors and robustness to sparse noise is explicit. We discuss in Section~\ref{sec:joint_proj} how to adapt GPGD  to unknown noise support by recasting the problem with joint models of signal and noise. 
	
	In Figure~\ref{fig:stable_iht}, we illustrate the trade-off between noise adaptation and identifiability in the case of sparse recovery with a random measurement operator. For different values of sparsity $k$, we add a high amplitude outlier noise of sparsity $s$ to a low energy Gaussian noise and perform a $m-s$ hard thresholding of the residual $Ax_n-y$ in PGD (i.e. iterative hard thresholding). As predicted by the theory, the trade-off between $k$ and $s$ drives the success of the algorithm through the restricted isometry constant of $\mu A^TSA$ (here $S$ is the support selected by the hard thresholding operator on the residual).
	
	\begin{figure}[h!]
		\centering
		\includegraphics[width=.75\linewidth]{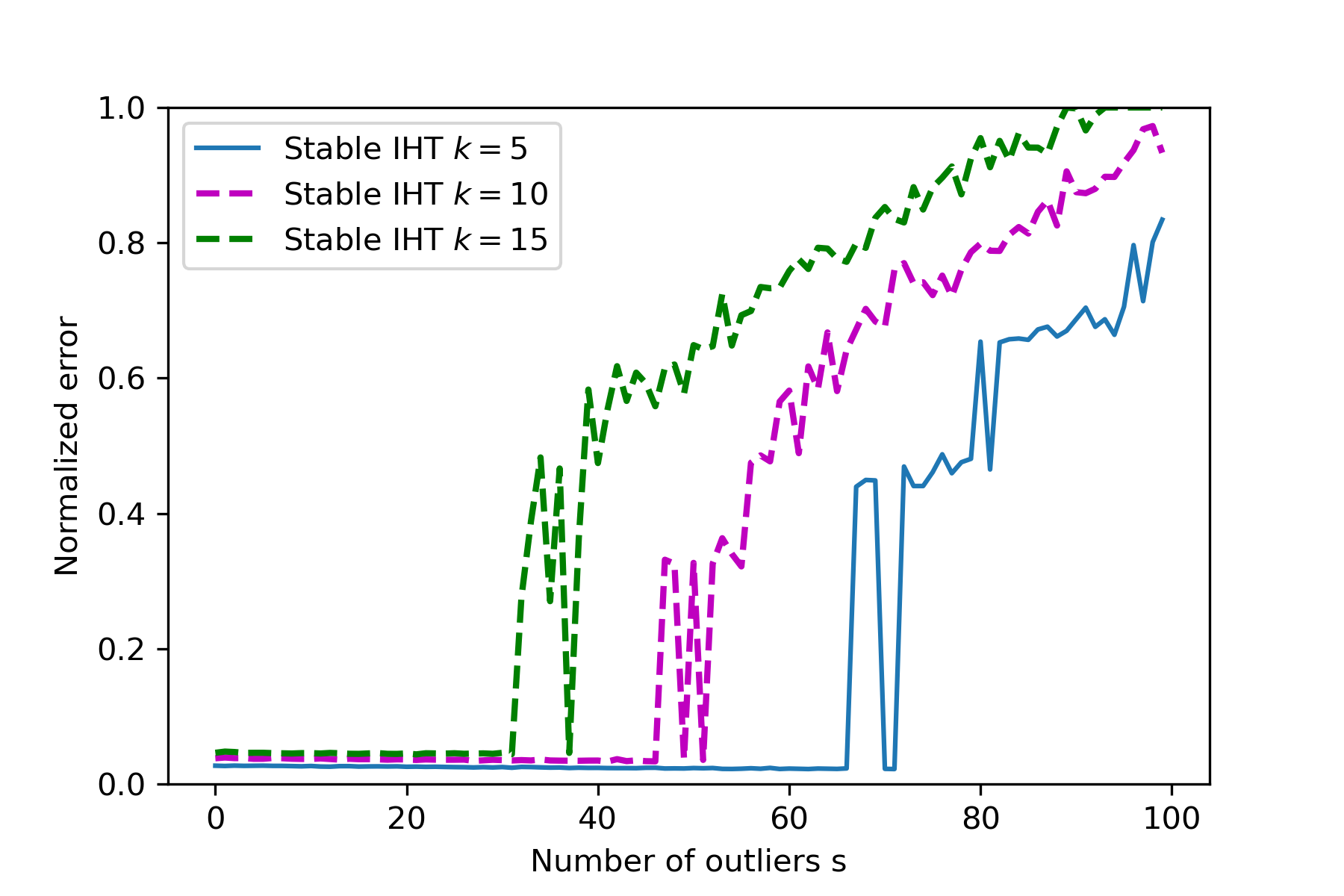}
		\caption{Adaptation to sparse noise in IHT and sparse recovery: Illustration of the trade-off between stability to noise sparsity and identifiability of sparse vectors. We show the Normalized error bound for 90$\%$ of the experiments with respect to the sparsity $s$ of outliers for three different sparsities. The sparser the noise is, the greater the identifiability properties of stable IHT. }
		
		\label{fig:stable_iht}
	\end{figure}

	\section{A remark on sparse noise estimation with GPGD} \label{sec:joint_proj}
	When the support of the sparse noise is not known, note that the problem of robust low-dimensional recovery can be recast as a simple low-dimensional recovery problem where we consider $\tilde{\Sigma} = \Sigma \times \Sigma_{noise}$ and
	\begin{equation}
		\tilde{y} = \tilde{A}\tilde{x}
	\end{equation}
	where $\tilde{x} = (\gt^T,e^T)^T$, $\tilde{A} = (A,I)$.
	The algorithm then estimates at each step both the unknown and the noise. The RIP condition notoriously relies on an incoherence between noise and sparsity model (i.e. separated support of noise and gradients for gradient-sparse image recovery~\cite{traonmilin2015robust}). It is natural to study the question of the choice of optimal $P_{\tilde{\Sigma}}$ in this case, as the low-dimensional model for signal and noise can be defined as a product model. In this case, we can construct the optimal projection for the restricted Lipschitz constant by simply concatenating optimal projections for each model set. Recall that we can define the $\ell^2$-norm in a product of Euclidean spaces $E_1 \times E_2 $ by $\|(x_1,x_2)\|_2 := \sqrt{\|x_1\|_2^2+\|x_2\|_2^2}$ for all $(x_1,x_2) \in E_1 \times E_2$. We have the following Lemma.

	\begin{lemma}\label{lem:product_model_proj}
		Let $\Sigma = \Sigma_1 \times \Sigma_2$, $P_i \in \arg \min_P \beta_{\Sigma_i}(P)$. Consider $P_\Sigma (z) = P_\Sigma (z_1,z_2)= (P_1(z_1), P_2(z_2))$. Then $P_\Sigma \in \arg \min_P \beta_{\Sigma}(P)$.
	\end{lemma}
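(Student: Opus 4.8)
The plan is to prove optimality of $P_\Sigma$ by sandwiching its restricted Lipschitz constant between the quantity $\max(\beta_{\Sigma_1}(P_1),\beta_{\Sigma_2}(P_2))$ from above and from below. The upper bound says the product projection inherits the \emph{maximum} of the two factor constants; the lower bound says that \emph{no} projection onto $\Sigma$ can beat this value. Since $P_1,P_2$ are optimal on their factors, chaining the two inequalities will force $\beta_\Sigma(P_\Sigma)\le\beta_\Sigma(\tilde P)$ for every projection $\tilde P$, which is exactly the $\arg\min$ claim.

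For the upper bound I would take any $z=(z_1,z_2)$, any $x=(x_1,x_2)\in\Sigma$, and any $u=(u_1,u_2)$ with $u_1\in P_1(z_1)$, $u_2\in P_2(z_2)$. Using the product $\ell^2$-norm and the restricted Lipschitz property of each $P_i$ gives the routine estimate
\[
\|u-x\|_2^2 = \|u_1-x_1\|_2^2 + \|u_2-x_2\|_2^2 \le \beta_{\Sigma_1}(P_1)^2\|z_1-x_1\|_2^2 + \beta_{\Sigma_2}(P_2)^2\|z_2-x_2\|_2^2,
\]
which is at most $\max(\beta_{\Sigma_1}(P_1),\beta_{\Sigma_2}(P_2))^2\,\|z-x\|_2^2$. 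Hence $\beta_\Sigma(P_\Sigma)\le\max(\beta_{\Sigma_1}(P_1),\beta_{\Sigma_2}(P_2))$.

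The lower bound is the main obstacle, because an arbitrary projection $\tilde P$ onto $\Sigma$ need not decompose as a product. The key idea is to \emph{induce} a projection onto each factor by freezing one coordinate. Using that $\Sigma$ is homogeneous (so $0\in\Sigma_2$), I would define $Q(w):=\{u_1 : (u_1,u_2)\in\tilde P(w,0)\}$, a valid generalized projection onto $\Sigma_1$. For $w\in\bR^{N_1}$ and $x_1\in\Sigma_1$, applying the restricted Lipschitz property of $\tilde P$ at the points $z=(w,0)$ and $x=(x_1,0)\in\Sigma$ gives, for every $u=(u_1,u_2)\in\tilde P(w,0)$, the bound $\|u-x\|_2\le\beta_\Sigma(\tilde P)\|z-x\|_2$. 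Since $\|z-x\|_2=\|w-x_1\|_2$ and $\|u-x\|_2^2=\|u_1-x_1\|_2^2+\|u_2\|_2^2\ge\|u_1-x_1\|_2^2$, I obtain $\|u_1-x_1\|_2\le\beta_\Sigma(\tilde P)\|w-x_1\|_2$ for all $u_1\in Q(w)$, i.e. $\beta_{\Sigma_1}(Q)\le\beta_\Sigma(\tilde P)$. Optimality of $P_1$ then yields $\beta_{\Sigma_1}(P_1)\le\beta_{\Sigma_1}(Q)\le\beta_\Sigma(\tilde P)$, and symmetrically $\beta_{\Sigma_2}(P_2)\le\beta_\Sigma(\tilde P)$.

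Finally I would combine the pieces: the induced-projection argument gives $\max(\beta_{\Sigma_1}(P_1),\beta_{\Sigma_2}(P_2))\le\beta_\Sigma(\tilde P)$ for every projection $\tilde P$ onto $\Sigma$, while the upper bound gives $\beta_\Sigma(P_\Sigma)\le\max(\beta_{\Sigma_1}(P_1),\beta_{\Sigma_2}(P_2))$. Together these force $\beta_\Sigma(P_\Sigma)\le\beta_\Sigma(\tilde P)$ for all $\tilde P$, so $P_\Sigma\in\arg\min_P\beta_\Sigma(P)$. The only subtleties I would double-check are the set-valued bookkeeping (that the quantifier ``for all $u\in\tilde P(z)$'' transfers to the first coordinate of $Q$) and that the frozen value $0$ could be replaced by any fixed $c\in\Sigma_2$ should one wish to avoid invoking homogeneity.
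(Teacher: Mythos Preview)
Your proposal is correct and follows essentially the same route as the paper: an upper bound via the factorwise Lipschitz estimates, then a lower bound by freezing one coordinate inside the factor to induce a projection on the other factor and invoking optimality there. The only cosmetic differences are that the paper argues the lower bound by contradiction (assuming $\beta_\Sigma(Q)<\max_i\beta_{\Sigma_i}(P_i)$) and freezes the coordinate at an arbitrary $x_1\in\Sigma_1$ rather than at $0$; your final remark that any fixed $c\in\Sigma_2$ works already anticipates this.
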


	\begin{proof}[Proof of Lemma~\ref{lem:product_model_proj}]
		Suppose, w.l.o.g that $\beta_2 \geq \beta_1$. Let $z \in \bR^{N_1} \times \bR^{N_2}, x \in \Sigma$. We have
		\begin{equation}
			\begin{split}
				\|P_\Sigma(z)-x\|_2^2 &= \|P_1(z_1)-x_1\|_2^2 +\|P_2(z_2)-x_2\|_2^2 \leq \beta_1^2 \|z_1-x_1\|_2^2 +\beta_2^2\|z_2-x_2\|_2^2\\
				&\leq \max(\beta_1^2,\beta_2^2)(\|z_1-x_1\|_2^2 +\|z_2-x_2\|_2^2) = \beta_2^2\|z-x\|_2^2\\
			\end{split}
		\end{equation}
		We have thus shown that $\beta_\Sigma(P_\Sigma) \leq \beta_2$.
		
		Now let $Q$ be a generalized projection such that $\beta_\Sigma(Q) < \beta_2$. We have, for all $z \in E$,
		
		\begin{equation}
			\begin{split}
				\|Q(z)-x\|_2^2 &\leq \beta_\Sigma(Q)^2\|z-x\|_2^2\\
			\end{split}
		\end{equation}
		Consider $x = (x_1,x_2)\in \Sigma$, $z = (x_1,z_2) \in \Sigma_1 \times \bR^{N_2}$. We have
		
		\begin{equation}
			\begin{split}
				\|Q(z)-x\|_2^2 = \|[Q(z)]_1-x_1\|_2^2 +\|[Q(z)]_2-x_2\|_2^2 &\leq \beta_\Sigma(Q)^2(\|x_1-x_1\|_2^2 +\|z_2-x_2\|_2^2)\\
				&= \beta_\Sigma(Q)^2 \|z_2-x_2\|_2^2\\
			\end{split}
		\end{equation}
		
		Consider the application $\tilde{P}_2 : \bR^{N_2} \to \bR^{N_2}$ defined by $\tilde{P}_2(z_2) = [Q(x_1,z_2)]_2 \in \Sigma_2$. We deduce that
		
		\begin{equation}
			\begin{split}
				\|\tilde{P}_2(z_2)-x_2\|_2^2 = \|[Q(z)]_2-x_2\|_2^2 &\leq  \beta_\Sigma(Q)^2 \|z_2-x_2\|_2^2\\
			\end{split}
		\end{equation}
		We deduce that $\beta_{\Sigma_2}(\tilde{P}_2) \leq \beta_\Sigma(Q) < \beta_2 = \beta_{\Sigma_2}^\star$, where $\beta_{\Sigma_2}^\star$ is the optimal restricted Lipschitz constant (by hypothesis), which is impossible.
		
		We deduce that $\beta_\Sigma(Q) =\beta_{\Sigma}^\star$.
	\end{proof}
	We can generalize to any number of product models with the following corollary.
	
	\begin{corollary}\label{cor:cor_prod_model}
		Let $\Sigma = \Sigma_1 \times \ldots \times \Sigma_q$, $P_i \in \arg \min_P \beta_{\Sigma_i}(P)$. Consider $P_\Sigma (z) = P_\Sigma (z)= (P_1(z_1), \ldots, P_q(z_q))$. Then $P_\Sigma \in \arg \min_P \beta_{\Sigma}(P)$.
	\end{corollary}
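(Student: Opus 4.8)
The plan is to prove Corollary~\ref{cor:cor_prod_model} by induction on the number of factors $q$, using Lemma~\ref{lem:product_model_proj} as the base case and inductive step. The key observation is that a $q$-fold product model can be viewed as a $2$-fold product by grouping the first $q-1$ factors together: writing $\Sigma = (\Sigma_1 \times \ldots \times \Sigma_{q-1}) \times \Sigma_q$, I would apply Lemma~\ref{lem:product_model_proj} with $\Sigma_1' := \Sigma_1 \times \ldots \times \Sigma_{q-1}$ and $\Sigma_2' := \Sigma_q$.

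First I would set up the induction. The case $q=1$ is trivial (the statement is tautological) and the case $q=2$ is exactly Lemma~\ref{lem:product_model_proj}. For the inductive step, I assume the result holds for products of $q-1$ models, so that $P_{1:q-1}(z_1,\ldots,z_{q-1}) := (P_1(z_1),\ldots,P_{q-1}(z_{q-1}))$ is an optimal projection for $\Sigma_1' = \Sigma_1 \times \ldots \times \Sigma_{q-1}$, i.e. $P_{1:q-1} \in \arg\min_P \beta_{\Sigma_1'}(P)$. I would then regard $\Sigma = \Sigma_1' \times \Sigma_q$ as a two-factor product and invoke Lemma~\ref{lem:product_model_proj} with the optimal projections $P_{1:q-1}$ on $\Sigma_1'$ and $P_q$ on $\Sigma_q$. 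The lemma then yields that $(P_{1:q-1}, P_q)$ is optimal for $\Sigma$. Finally I would identify this concatenated projection with $(P_1,\ldots,P_q)$, which holds because the $\ell^2$-norm on $E_1 \times \ldots \times E_q$ associates correctly: grouping the first $q-1$ blocks leaves the squared-norm decomposition $\sum_{i=1}^q \|\cdot\|_2^2$ unchanged.

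The only subtlety to be careful about is the compatibility of the Euclidean product structure under the grouping, namely that $\|(z_1,\ldots,z_q)\|_2^2 = \|(z_1,\ldots,z_{q-1})\|_2^2 + \|z_q\|_2^2$, so that the restricted Lipschitz constant computed on the regrouped two-factor model genuinely coincides with the one on the original $q$-factor model. This is immediate from the definition of the $\ell^2$-norm on a product of Euclidean spaces recalled before Lemma~\ref{lem:product_model_proj}, so I do not expect any real obstacle; the argument is a direct induction that simply repackages Lemma~\ref{lem:product_model_proj}. The main (minor) care is bookkeeping: ensuring that both the upper bound $\beta_\Sigma(P_\Sigma) \leq \max_i \beta_{\Sigma_i}(P_i)$ and the optimality (no $Q$ can do strictly better) are inherited from the two-factor lemma at each step, which they are since the lemma delivers full optimality, not merely the upper bound.
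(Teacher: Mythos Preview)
Your proposal is correct and follows essentially the same approach as the paper: induction on the number of factors, grouping all but the last factor into a single block and then invoking Lemma~\ref{lem:product_model_proj} on the resulting two-factor product. Your additional remarks about the associativity of the $\ell^2$-norm under regrouping and about inheriting full optimality (not just the upper bound) from the lemma are accurate and make explicit what the paper leaves implicit, but the underlying argument is the same.
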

	
	\begin{proof}[Proof of Corollary~\ref{cor:cor_prod_model}]
		By induction on $q$, for $q= 2$, use Lemma~\ref{lem:product_model_proj}.
		
		Suppose this corollary true for some $q$, for $q+1$ consider $\tilde{\Sigma}_1 = \Sigma_1 \times \ldots \times \Sigma_q$ and $\tilde{\Sigma_{2}} = \Sigma_{q+1}$. Apply the corollary to  $\tilde{\Sigma}_1$ and Lemma~\ref{lem:product_model_proj} to $\tilde{\Sigma}_1 \times \tilde{\Sigma_{2}}$.
	\end{proof}
	
	Note that we took the example of sparse corruptions of sparse models. In some applications, such as low-rank models and sparse noise (or sparse models and low-rank noise), restricted isometries can be obtained if there is sufficient incoherence between the sparsity model and the low-rank model. In a learning context, jointly learning projective priors for additive models has been explored in the context of structure-texture decomposition~\cite{guennec2025joint}.

	\section{Additional experiments} \label{sec:additional_exp}
	\subsection{Autoencoders}
	We propose in this subsection to train autoencoders over the MNIST dataset with and without NIPR. The size of the train set is 30000. To test our regularization for these, we propose to solve an inpainting inverse problem, and we display the graphs of $\frac{\|x-\hat{x}\|}{\|\hat{x}\|}$. Figure \ref{fig:MNIST} shows that NIPR led to a more stable convergence as the curves are not increasing as much as for GPGD after reaching the optimal solution.

	\begin{figure}[!h]
		\centering
		\begin{subfigure}{0.40\textwidth}
			\includegraphics[width=1\textwidth]{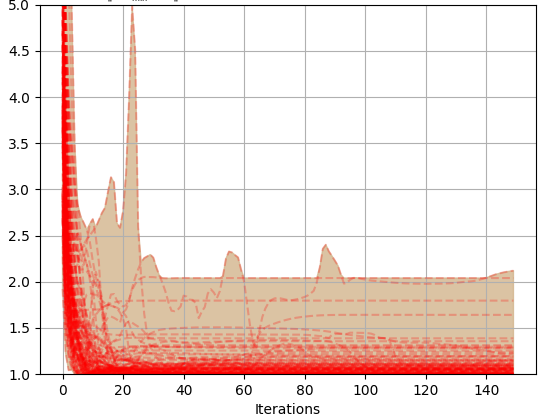}
			\caption{No reg.}
		\end{subfigure}
		\hspace{1cm}
		\begin{subfigure}{0.40\textwidth}
			\includegraphics[width=1\textwidth]{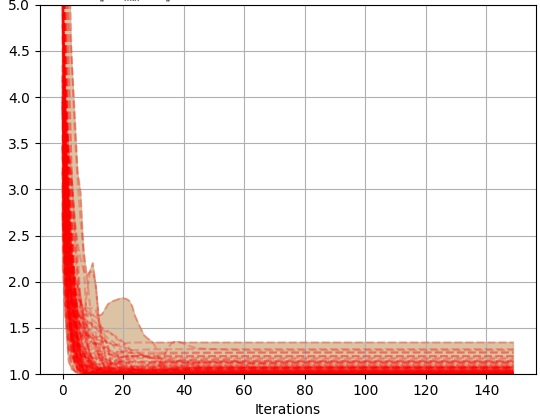}
			\caption{NIPR}
		\end{subfigure}

		\begin{subfigure}{0.8\textwidth}
			\includegraphics[width=1\textwidth]{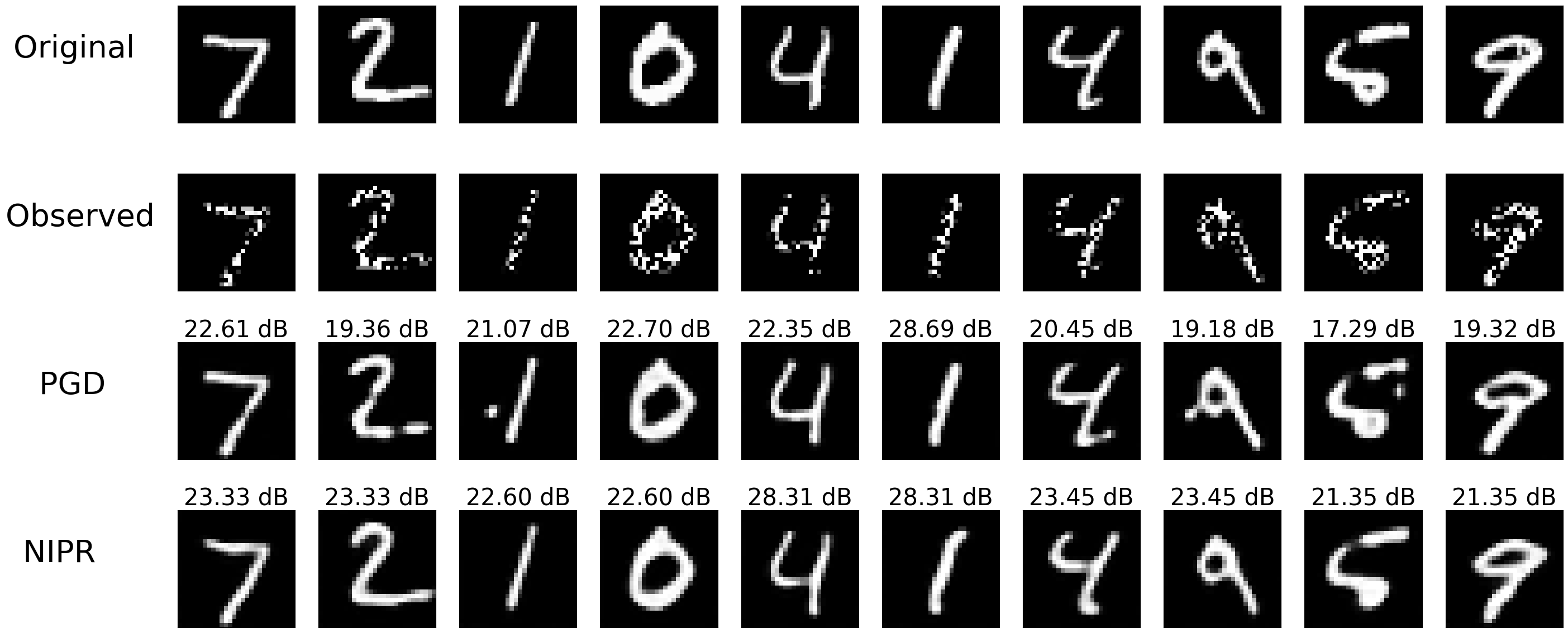}
			\caption{}
		\end{subfigure}

		\caption{Inpainting of MNIST images using autoencoders with and without NIPR regularization. Once again, and for another DPP, the NIPR led to a more stable convergence compared to when we only use the vanilla GPGD algorithm.}\label{fig:MNIST}
	\end{figure}
	
	This shows in particular that, similarly to denoisers, NIPR can act on the stability of GPGD for another DPP, an autoencoder here.

\newpage 	
	\subsection{Additional experiments on denoisers}
	We propose in this subsection to consider additional inverse problems experiments using denoisers in their GPGD algorithms. Therefore, we consider the deblurring and inpainting tasks.
	
	\subsubsection{Deblurring}
	\begin{table}[h!]	
		\caption{PSNRs and stability values for a deblurring inverse problem using GPGD without regularization, NIPR and SOR. While recovering the original image correctly, NIPR is clearly more stable after reaching $x_{\min}$ compared to GPGD and SOR.}\label{tab:denoising_fig}
		\begin{minipage}{0.48\textwidth}
			\centering
			
			\begin{tabular}{ccccccccc}
				\toprule
				\multirow{2}{*}{Method} & \multirow{2}{*}{PSNR$\uparrow$} & \multirow{2}{*}{SSIM$\uparrow$} & \multicolumn{3}{c}{SM1 $\downarrow$}  & \multicolumn{3}{c}{SM2 $\downarrow$}  \\
				
				&            &            & $i_{\min}+10$ & $i_{\min}+50$ & $i_{\min}+100$ & $i_{\min}+10$ & $i_{\min}+50$ & $i_{\min}+100$ \\
				\hline
				No reg.           & \textbf{29,141}        & \textbf{0,832}         & 0,0014  & 1,5981            & 8,6641   & \textbf{0,060}   & 0,278   & 1,259   \\
				NIPR          & 28,946        & 0,819         & \textbf{0,0008}  & \textbf{0,1795}            & \textbf{0,8984}   & 0,068   &\textbf{ 0,133}   &\textbf{ 0,234}   \\
				SOR           & 28,147        & 0,797         & 0,0820  & 16,0790            & 37,2008  & 0,064   & 1,331   & 2,745 \\
				\hline   
			\end{tabular}
		\end{minipage}
	\end{table}

	\begin{figure}[h!]
		\centering
		
		\begin{subfigure}[t]{0.36\textwidth}
			\centering
			\includegraphics[width=\textwidth]{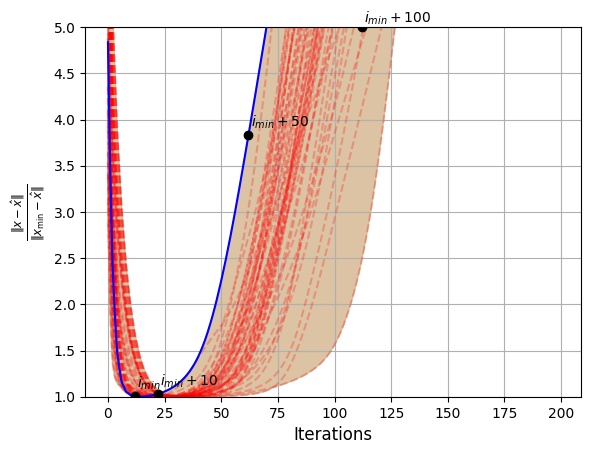}
			\subcaption{Evolution of $\frac{\|x-i-\hat{x}\|_2}{\|x_{\min}-\hat{x}\|_2}$  for 50 images. 
				The blue curve is associated to the recovery of image (b)}
		\end{subfigure}
		\hspace{0.5cm}
		\begin{subfigure}[t]{0.18\textwidth}
			\centering
			\includegraphics[width=\textwidth]{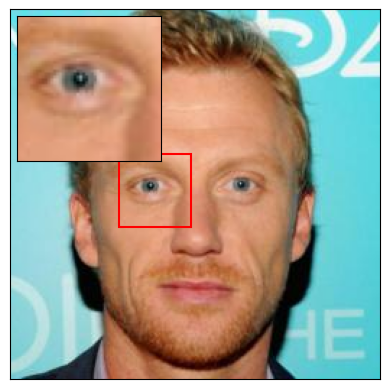}
			\subcaption{Ground truth}\label{fig:supres_pgd}
		\end{subfigure}
		\hspace{0.02cm}
		\begin{subfigure}[t]{0.18\textwidth}
			\centering
			\includegraphics[width=\textwidth]{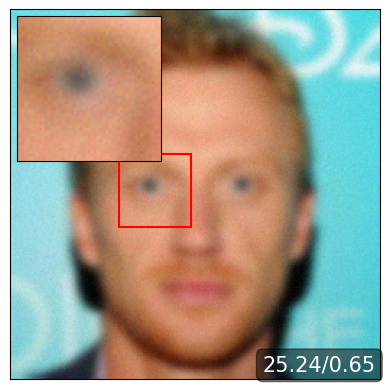}
			\subcaption{Observed}
		\end{subfigure}
		
		\vspace{0.5cm}
		
		\begin{subfigure}[t]{0.18\textwidth}
			\centering
			\includegraphics[width=\textwidth]{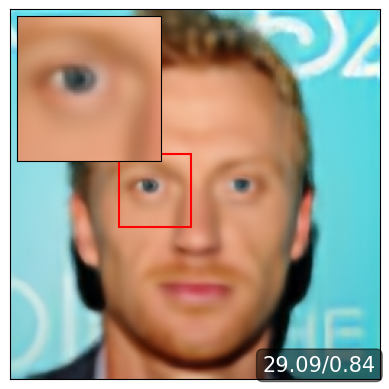}
			\subcaption{$x$ at $i_{\min}$ ($x_{\min}$)}
		\end{subfigure}
		\hspace{0.02cm}
		\begin{subfigure}[t]{0.18\textwidth}
			\centering
			\includegraphics[width=\textwidth]{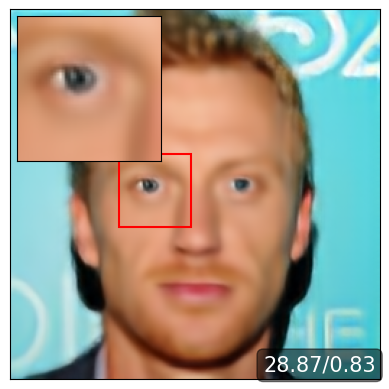}
			\subcaption{$x$ at $i_{\min}+10$}
		\end{subfigure}
		\hspace{0.02cm}
		\begin{subfigure}[t]{0.18\textwidth}
			\centering
			\includegraphics[width=\textwidth]{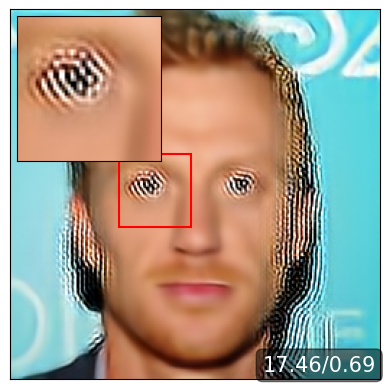}
			\subcaption{$x$ at $i_{\min}+50$}
		\end{subfigure}
		\hspace{0.02cm}
		\begin{subfigure}[t]{0.18\textwidth}
			\centering
			\includegraphics[width=\textwidth]{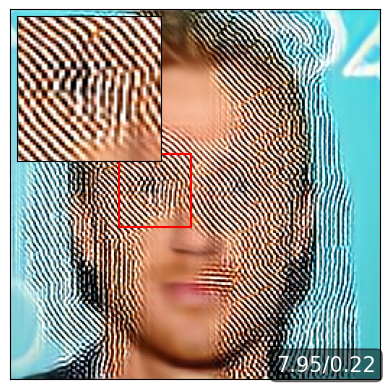}
			\subcaption{$x$ at $i_{\min}+100$}
		\end{subfigure}
		
		\caption{Deblurring of images using a GPGD algorithm without regularization. The quantity $x-\hat{x}$ quickly diverges after reaching the optimal solution and the images become unusable with several artifacts.}\label{fig:Gblur_pgd_graph_n_visu}
	\end{figure}
	
	\begin{figure}[h!]
		\centering	
		\begin{subfigure}[t]{0.36\textwidth}
			\centering
			\includegraphics[width=\textwidth]{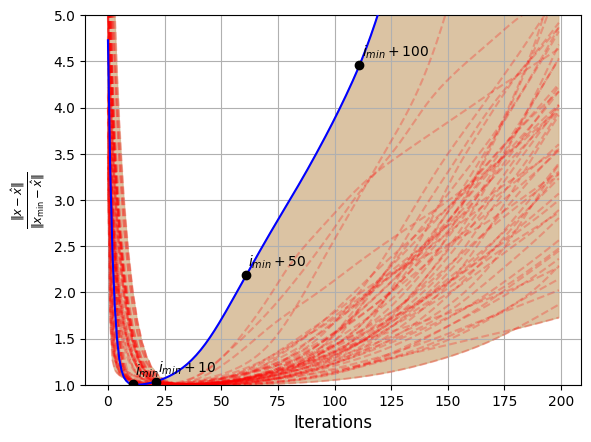}
			\subcaption{Evolution of $\frac{\|x_i-\hat{x}\|_2}{\|x_{\min}-\hat{x}\|_2}$  for 50 images. 
				The blue curve is associated to the recovery of image (b)}
		\end{subfigure}
		\hspace{0.5cm}
		\begin{subfigure}[t]{0.18\textwidth}
			\centering
			\includegraphics[width=\textwidth]{Experiments/Gblur/48.png}
			\subcaption{Ground truth}\label{fig:supres_nipr}
		\end{subfigure}
		\hspace{0.02cm}
		\begin{subfigure}[t]{0.18\textwidth}
			\centering
			\includegraphics[width=\textwidth]{Experiments/Gblur/48_Noisy_0_0.02.png}
			\subcaption{Observed}
		\end{subfigure}
		
		\vspace{0.5cm}
		
		\begin{subfigure}[t]{0.18\textwidth}
			\centering
			\includegraphics[width=\textwidth]{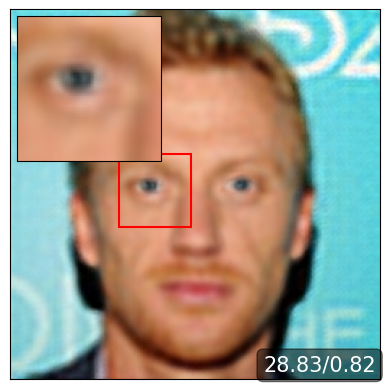}
			\subcaption{$x$ at $i_{\min}$ ($x_{\min}$)}
		\end{subfigure}
		\hspace{0.02cm}
		\begin{subfigure}[t]{0.18\textwidth}
			\centering
			\includegraphics[width=\textwidth]{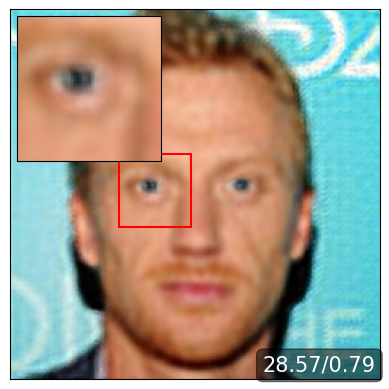}
			\subcaption{$x$ at $i_{\min}+10$}
		\end{subfigure}
		\hspace{0.02cm}
		\begin{subfigure}[t]{0.18\textwidth}
			\centering
			\includegraphics[width=\textwidth]{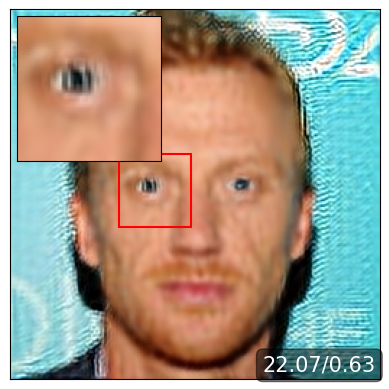}
			\subcaption{$x$ at $i_{\min}+50$}
		\end{subfigure}
		\hspace{0.02cm}
		\begin{subfigure}[t]{0.18\textwidth}
			\centering
			\includegraphics[width=\textwidth]{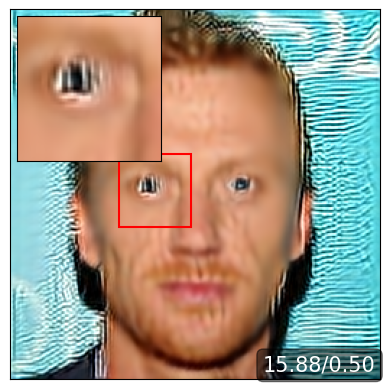}
			\subcaption{$x$ at $i_{\min}+100$}
		\end{subfigure}
		
		\caption{Deblurring of images using NIPR. The quantity $x-\hat{x}$ is diverging after reaching the optimal solution. The images degrade more slowly than GPGD without regularization. }\label{fig:Gblur_nipr_graph_n_visu}
	\end{figure}

	\newpage
	\subsubsection{Inpainting}
	
	\begin{table}[h!]	
		\caption{PSNRs and stability values for an inpainting inverse problem using GPGD without regularization, NIPR and SOR. While recovering the original image correctly, NIPR is clearly more stable after reaching $x_{\min}$ compared to GPGD without regularization. For this particular inverse problem, SOR maintains a good stability.}\label{tab:denoising_fig}
		\begin{minipage}{0.48\textwidth}
			\centering
			
			\begin{tabular}{ccccccccc}
				\toprule
				\multirow{2}{*}{Method} & \multirow{2}{*}{PSNR$\uparrow$} & \multirow{2}{*}{SSIM$\uparrow$} & \multicolumn{3}{c}{SM1 $\downarrow$}  & \multicolumn{3}{c}{SM2 $\downarrow$}  \\
				
				&            &            & $i_{\min}+10$ & $i_{\min}+50$ & $i_{\min}+100$ & $i_{\min}+10$ & $i_{\min}+50$ & $i_{\min}+100$ \\
				\hline
				No reg.           & 35,615        & 0,956         & 0,0039  & 0,0727            & 0,1376   & 0,088   & 0,468   & 1,197   \\
				NIPR          & \textbf{36,763}        & \textbf{0,959 }        & \textbf{0,0002}  & 0,0005            & 0,0017   & \textbf{0,037}   & \textbf{0,098}   & 0,134   \\
				SOR           & 34,870        & 0,945         & \textbf{0,0002}  & \textbf{0,0002}            & \textbf{0,0002}   & 0,054   & 0,099   & \textbf{0,113} \\
				\hline   
			\end{tabular}
		\end{minipage}
	\end{table}

	\begin{figure}[h!]
		\centering
		
		\begin{subfigure}[t]{0.36\textwidth}
			\centering
			\includegraphics[width=\textwidth]{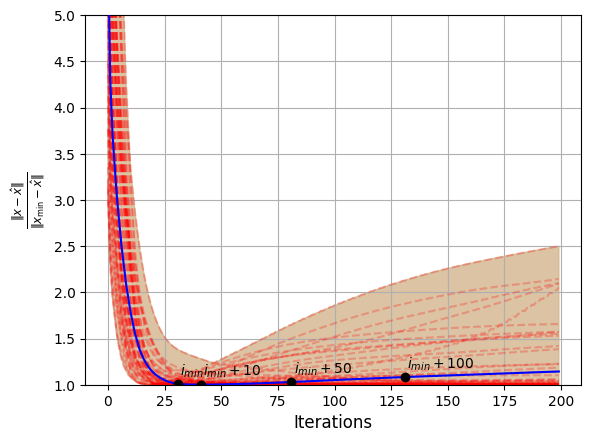}
			\subcaption{Evolution of $\frac{\|x-\hat{x}\|}{\|x_{\min}-\hat{x}\|}$for 50 images. 
				The blue curve is associated to the recovery of image (b)}
		\end{subfigure}
		\hspace{0.5cm}
		\begin{subfigure}[t]{0.18\textwidth}
			\centering
			\includegraphics[width=\textwidth]{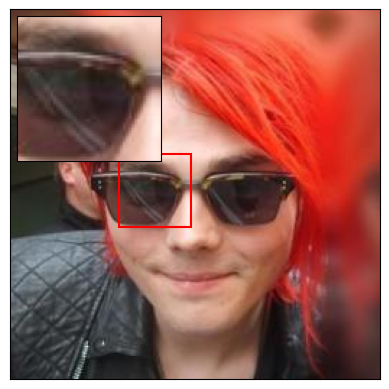}
			\subcaption{Ground truth}\label{fig:supres_pgd}
		\end{subfigure}
		\hspace{0.02cm}
		\begin{subfigure}[t]{0.18\textwidth}
			\centering
			\includegraphics[width=\textwidth]{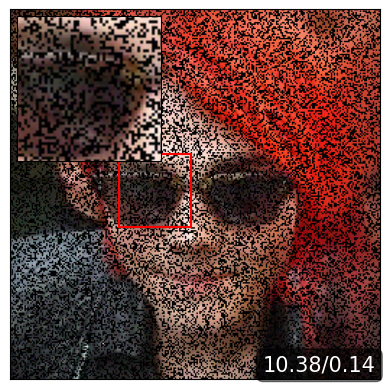}
			\subcaption{Observed}
		\end{subfigure}
		
		\vspace{0.5cm}
		
		\begin{subfigure}[t]{0.18\textwidth}
			\centering
			\includegraphics[width=\textwidth]{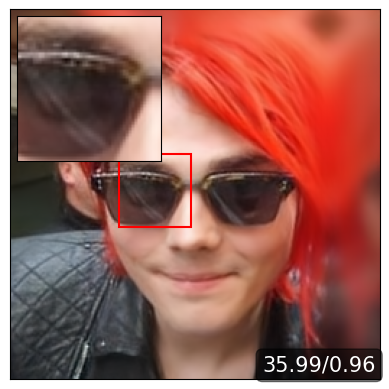}
			\subcaption{$x$ at $i_{\min}$ ($x_{\min}$)}
		\end{subfigure}
		\hspace{0.02cm}
		\begin{subfigure}[t]{0.18\textwidth}
			\centering
			\includegraphics[width=\textwidth]{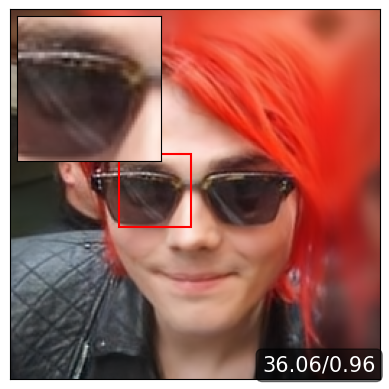}
			\subcaption{$x$ at $i_{\min}+10$}
		\end{subfigure}
		\hspace{0.02cm}
		\begin{subfigure}[t]{0.18\textwidth}
			\centering
			\includegraphics[width=\textwidth]{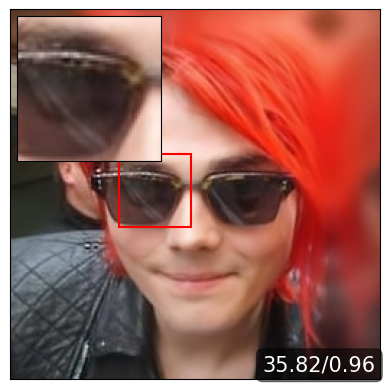}
			\subcaption{$x$ at $i_{\min}+50$}
		\end{subfigure}
		\hspace{0.02cm}
		\begin{subfigure}[t]{0.18\textwidth}
			\centering
			\includegraphics[width=\textwidth]{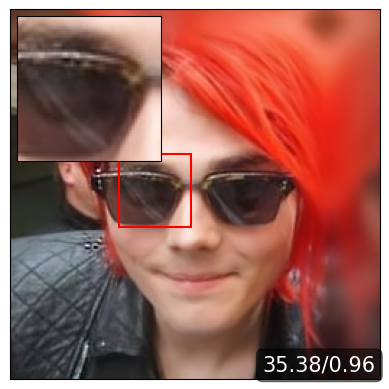}
			\subcaption{$x$ at $i_{\min}+100$}
		\end{subfigure}
		
		\caption{Inpainting of images using a PGD algorithm. The quantity $x-\hat{x}$ is slowly increasing after reaching the optimal solution but the images can still be used on average. }\label{fig:Gblur_pgd_graph_n_visu}
	\end{figure}
	
	\begin{figure}[h!]
		\centering	
		\begin{subfigure}[t]{0.36\textwidth}
			\centering
			\includegraphics[width=\textwidth]{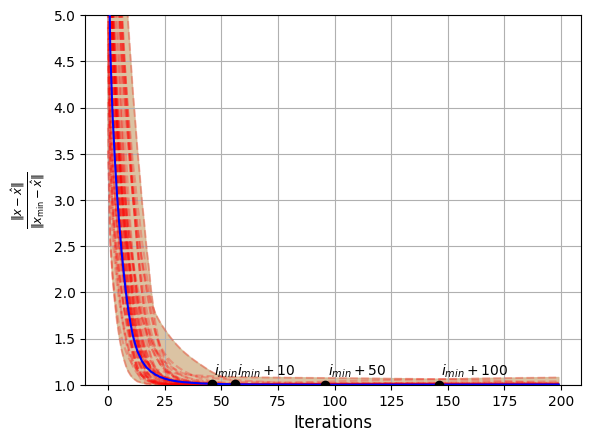}
			\subcaption{Evolution of $\frac{\|x-\hat{x}\|}{\|x_{\min}-\hat{x}\|}$ for 50 images. 
				The blue curve is associated to the recovery of image (b)}
		\end{subfigure}
		\hspace{0.5cm}
		\begin{subfigure}[t]{0.18\textwidth}
			\centering
			\includegraphics[width=\textwidth]{Experiments/Inpainting/4.png}
			\subcaption{Ground truth}\label{fig:supres_nipr}
		\end{subfigure}
		\hspace{0.02cm}
		\begin{subfigure}[t]{0.18\textwidth}
			\centering
			\includegraphics[width=\textwidth]{Experiments/Inpainting/4_Noisy_0.4_0.02.png}
			\subcaption{Observed}
		\end{subfigure}
		
		\vspace{0.5cm}
		
		\begin{subfigure}[t]{0.18\textwidth}
			\centering
			\includegraphics[width=\textwidth]{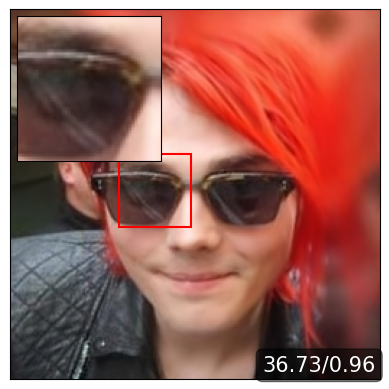}
			\subcaption{$x$ at $i_{\min}$ ($x_{\min}$)}
		\end{subfigure}
		\hspace{0.02cm}
		\begin{subfigure}[t]{0.18\textwidth}
			\centering
			\includegraphics[width=\textwidth]{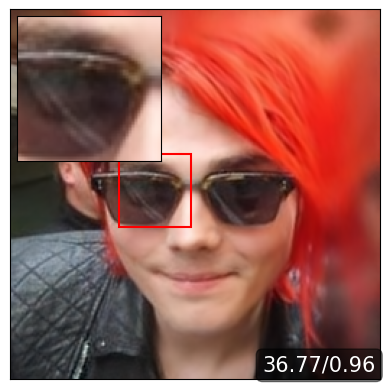}
			\subcaption{$x$ at $i_{\min}+10$}
		\end{subfigure}
		\hspace{0.02cm}
		\begin{subfigure}[t]{0.18\textwidth}
			\centering
			\includegraphics[width=\textwidth]{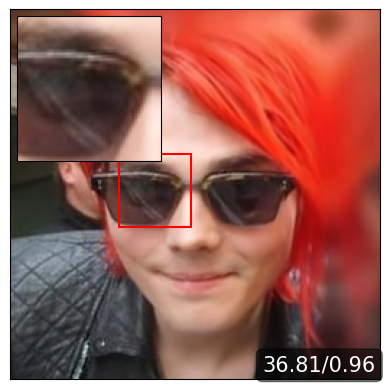}
			\subcaption{$x$ at $i_{\min}+50$}
		\end{subfigure}
		\hspace{0.02cm}
		\begin{subfigure}[t]{0.18\textwidth}
			\centering
			\includegraphics[width=\textwidth]{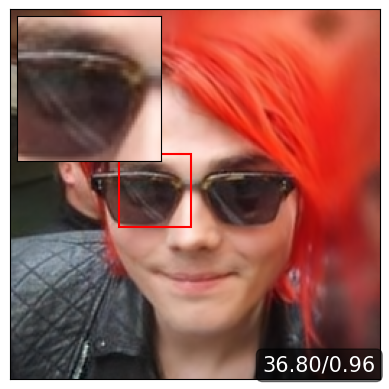}
			\subcaption{$x$ at $i_{\min}+100$}
		\end{subfigure}
		
		\caption{Inpainting of images using NIPR. The quantity $x-\hat{x}$ is converges and is stable. The images at the end of the iterations represent the original image correctly. }\label{fig:Gblur_nipr_graph_n_visu}
	\end{figure}

\end{document}